\documentclass[a4paper,UKenglish, thm-restate]{lipics-v2021}
\usepackage{hyperref,cleveref}
\usepackage{amsmath,amssymb,amsthm}
\usepackage{paralist}
\usepackage{xcolor}
\usepackage{xspace}
\usepackage{bbm}
\usepackage{multirow}

\usepackage{tablefootnote}

\usepackage{makecell}

\usepackage{graphicx} 

\usepackage{algorithm}
\usepackage[noend]{algpseudocode}

\newcommand{\ts}{\textsuperscript}
\newcommand{\jacobus}[1]{\textcolor{green!50!blue}{Jacobus: #1}}
\newcommand{\andre}[1]{\textcolor{black!10!red}{André: #1}}

\hypersetup{
  colorlinks=true,
  linkcolor=red!70!black,
  linktoc=all,
  citecolor=blue,
    pdfpagemode=FullScreen,
}

\definecolor{Purple}{cmyk}{0.45,0.86,0,0}
\renewcommand{\emph}[1]{{\color{Purple}{\em #1}}}

\def\macrocolor{.}
\newcommand{\mathmacro}[1]{{\ensuremath{\textcolor{\macrocolor}{#1}}}}

\def\bR{\mathmacro{\mathbb R}}
\def\tildeO{\mathmacro{\tilde{\mathcal{O}}}}
\def\O{\mathmacro{\mathcal{O}}}
\def\area{\mathmacro{\mathrm{area}}}
\def\eps{\mathmacro{\mathrm{\varepsilon}}}

\def\rectPolyProb{{\textsc{OrthMaxOverlap}}\xspace}
\def\polyCont{{\textsc{PolygonContainment}}\xspace}
\def\aindex{\mathmacro{h}}
\def\bindex{\mathmacro{i}}
\def\cindex{\mathmacro{j}}
\def\dindex{\mathmacro{k}}
\def\eindex{\mathmacro{\ell}}



\newtheorem{thm}{Theorem}

\newtheorem{problem}[thm]{Problem}

\title{Algorithms and Lower Bounds for the Maximum Overlap of Two Polygons Under Translation}
\titlerunning{On the Maximum Overlap of Two Polygons Under Translation}

\author{Mikkel Abrahamsen}{University of Copenhagen,  Denmark\and \url{https://sites.google.com/view/mikkel-abrahamsen}}{miab@di.ku.dk}{https://orcid.org/0000-0003-2734-4690}{Supported by Independent Research Fund Denmark, grant 1054-00032B, and by the Carlsberg Foundation, grant CF24-1929.}
\author{Sujoy Bhore}{Indian Institute of Technology Bombay, India\and \url{https://sites.google.com/view/homepage-of-sujoy-bhore/home}}{sujoy@cse.iitb.ac.in}{https://orcid.org/0000-0003-0104-1659}{}
\author{Maike Buchin}{Ruhr University Bochum, Germany}{maike.buchin@rub.de}{https://orcid.org/0000-0002-3446-4343}{}
\author{Jacobus Conradi}{University of Bonn, Germany \and University of Copenhagen, Denmark \and \url{https://jacobus-conradi.eu} }{jacobus.conradi@gmx.de}{https://orcid.org/0000-0002-8259-1187}{Supported partially by the iBehave Network: Sponsored by the Ministry of Culture and Science of the State of North Rhine-Westphalia, and the Carlsberg Foundation, grant CF24-1929.}
\author{Ce Jin}{Massachusetts Institute of Technology, USA \and University of California, Berkeley, USA}{cejin@berkeley.edu}{https://orcid.org/0000-0001-5264-1772}{Supported by the Jane Street Graduate Research Fellowship, NSF grant CCF-2330048, and a Simons Investigator Award.}
\author{André Nusser}{Université Côte d'Azur, CNRS, Inria, France}{andre.nusser@cnrs.fr}{https://orcid.org/0000-0002-6349-869X}{Supported by the France 2030 investment plan managed by the ANR as part of the Initiative of Excellence of Université Côte d'Azur with reference number ANR-15-IDEX-01.}
\author{Carolin Rehs}{Technical University of Dortmund, Germany \and Technical University of Eindhoven, The Netherlands}{carolin.rehs@tu-dortmund.de}{https://orcid.org/0000-0002-8788-1028}{Partly supported by the PRIME programme of the German Academic Exchange Service (DAAD) with funds from the German Federal Ministry of Research, Technology and Space (BMFTR)}

\authorrunning{M. Abrahamsen, S. Bhore, M. Buchin, J. Conradi, C. Jin, A. Nusser and C. Rehs}
\Copyright{M. Abrahamsen, S. Bhore, M. Buchin, J. Conradi, C. Jin, A. Nusser and C. Rehs}
\date{February 2025}


\keywords{polygon overlap, polygon containment, algorithm design, lower bound}
\ccsdesc[500]{Theory of computation~Design and analysis of algorithms}

\begin{document}
\nolinenumbers

\maketitle
\begin{abstract}
\noindent Given two polygons of complexities $n$ and $m$ respectively, a fundamental problem in shape matching and geometric similarity is to compute their maximum area overlap under translation.
For general simple polygons, the best-known algorithm runs in $\mathcal{O}((nm)^2 \log(nm))$ time [Mount, Silverman, Wu ’96].
In a recent breakthrough that received the SoCG Best Paper Award 2025, Chan and Hair gave a linear-time algorithm for the special case when both polygons are convex. A key challenge in computational geometry is to design improved algorithms for other natural classes of polygons. We address this by presenting an $\mathcal{O}((nm)^{3/2} \log(nm))$-time algorithm for the case when both polygons are orthogonal, probably the most popular class of polygons besides convex and simple ones.
This is the first algorithm for polygon overlap on orthogonal polygons that is faster than the almost 30 years old algorithm for general simple polygons.

Complementing our algorithmic contribution, we provide $k$-SUM lower bounds for problems on simple polygons with only orthogonal and diagonal edges. 
First, we establish that there is no algorithm for polygon overlap with running time $\mathcal{O}(\max(n^2,nm^2)^{1-\varepsilon})$, where $m\leq n$, unless the $k$-SUM Hypothesis fails.
This matches the running time of our algorithm when $n=m$.
We use part of the above construction to also show a lower bound for the polygon containment problem, a popular special case of the overlap problem. Concretely, there is no algorithm for polygon containment with running time $\mathcal{O}(n^{2-\varepsilon})$ under the $3$-SUM Hypothesis, even when the polygon to be contained has $m \in \mathcal{O}(1)$ vertices.
Our lower bound shows that polygon containment for these types of polygons (i.e., with diagonal edges) is strictly harder than for orthogonal polygons, and also strengthens the previously known $3$-SUM lower bound for polygon containment of [Barequet, Har-Peled~'01].
Furthermore, our lower bounds show conditional tightness (up to polylogarithmic factors) of the algorithms of [Avnaim, Boissonnat~'89] and [Mount, Silverman, Wu~’96] when $m \in \mathcal{O}(1)$.
\end{abstract}

\newpage

\section{Introduction}\label{intro}


We consider the problem of computing the \emph{maximum overlap} of two polygons $P$ and $Q$ with $n$ and $m$ vertices, respectively, i.e., finding a translation $\tau = (\tau_x, \tau_y) \in\mathbb R^2$ such that the area of $P\cap (Q+\tau)$ is maximum.
This can be seen as a formalization of matching the shapes $P$ and $Q$, and the maximum overlap is a natural similarity measure: the larger it is relative to the areas of $P$ and $Q$, the more similar they are. Shape matching itself is a core technique in geographic information science, computer aided design, robotics, and medical imaging. Applications include remote sensing, fingerprint matching, molecular shape matching, and image sequence alignment \cite{meng2012fast,choi2007fingerprint,kim2011pubchem3d,10.1007/3-540-55426-2_53}.

The maximum overlap problem for general simple polygons was first studied by Mount, Silverman and Wu~\cite{DBLP:journals/cviu/MountSW96}.
They show that the overlap function $\tau\mapsto \area (P\cap (Q+\tau))$ is a continuous and piecewise polynomial surface of degree at most two, consisting of $\O((nm)^2)$ pieces.
Furthermore, they give an example of two orthogonal polygons (for arbitrary values of $n$ and $m$) that realize the bound, i.e., the complexity of the overlap function is $\Omega((nm)^2)$, and hence the bound is tight in the worst-case.
They also describe an algorithm to compute the entire function within the same bound $\O((nm)^2)$ on the running time, which yields an $\O((nm)^2 \log(nm))$ algorithm for the polygon overlap problem. They posed the open question whether an algorithm with better running time exists.
Approximation algorithms for finding the maximum overlap are described in~\cite{DBLP:journals/dcg/CheongEH07, DBLP:journals/comgeo/ChengL13,DBLP:journals/algorithmica/Har-PeledR17}. See also the survey~\cite{o2017polygons} on polygons from the handbook of discrete and computational geometry for an overview. 

A prominent special case is when both polygons are convex, for which the bound $\O(m^2 +n^2 + mn \cdot \min(m, n))$ on the complexity of the overlap function was given in~\cite{DBLP:journals/cviu/MountSW96}.
De Berg, Cheong, Devillers, van Kreveld, and Teillaud~\cite{DBLP:journals/mst/BergCDKT98} describe an algorithm that computes the maximum overlap of two convex polygons in time $\tildeO(n +m)$ ($\tildeO(\cdot)$ hides logarithmic factors in $n$ and $m$).
This was improved recently by Chan and Hair in a celebrated paper~\cite{DBLP:conf/compgeom/ChanH25} that received the SoCG Best Paper Award 2025. They describe a randomized algorithm with $\O(n+m)$ expected running time.

Maximizing the overlap has also been studied for convex polyhedra in higher dimensions, with Ahn, Brass, and Shin \cite{ahn2008maximum} providing an algorithm that runs in $\tildeO(n^{d+1-3/d})$ expected time. This was later improved upon by Ahn, Cheng and Reinbacher \cite{ahn2013maximum}, who presented an algorithm that for any $\mu>0$ computes a translation for which the area of the overlap is at most the optimum overlap minus $\mu$. The algorithm has running time $\tildeO(n)$ for $d=3$, and $\tildeO(n^{\lfloor d/2\rfloor+1})$ for $d\geq 4$.

Another variant of the problem that has been studied extensively is the minimization of the area of the convex hull of $P\cup (Q+\tau)$. For this problem, Jung, Kang, and Ahn \cite{jung2025minimum} presented an algorithm with deterministic running time $O(n)$ for $d=2$, $O(n^{(d+1)/2})$ for odd $d\geq 3$, and $\tildeO(n^{d/2})$ for even $d\geq 3$.

\begin{table}
    \caption{Upper and lower bounds for problems related to \rectPolyProb. 
    }
    \label{tab:bounds}
    \centering
    \renewcommand{\arraystretch}{1.2}
    \begin{tabular}{|c|c|c|c|}
        \hline
        Polygon Type & Objective & Lower Bound & Upper Bound \\
        \hline
        \hline
        \multirow{2}{*}{\textsc{Convex}} & \textsc{Containment} & $\Omega(n+m)$ & $\O(n+m)$ \cite{DBLP:conf/compgeom/ChanH25}\\\cline{2-4}
        & \textsc{MaxOverlap} & $\Omega(n+m)$ & $\O(n+m)$ \cite{DBLP:conf/compgeom/ChanH25}\\
        \hline
        \multirow{3}{*}{\textsc{Orthogonal}} & \textsc{Containment} & $\O((nm)^{1-\eps})$ \cite{KunnemannPolygonPlacement2021} &$\tildeO(nm)$ \cite{DBLP:journals/ipl/Barrera96}\\\cline{2-4}
        & \multirow{2}{*}{\textsc{MaxOverlap}} & \multirow{2}{*}{$\O((nm)^{1-\eps})$ \cite{KunnemannPolygonPlacement2021}} & $\tildeO((nm)^{3/2})$ \\
        & & & \Cref{thm:alg} \\
        \hline
        \multirow{5}{*}{\makecell{\textsc{Octolinear}\footnotemark \\ or \\ \textsc{Simple}}} & \multirow{2}{*}{\textsc{Containment}} & $\O(n^{2-\eps})$ & \multirow{2}{*}{$\tildeO((nm)^2)$ \cite{DBLP:journals/ita/AvnaimB89}}\\
        & & \Cref{cor:3sum} &\\\cline{2-4}
        & \multirow{3}{*}{\textsc{MaxOverlap}} & $\O(\max(n^2,nm^2)^{1-\eps})$ & \multirow{3}{*}{$\O((nm)^2)$ \cite{DBLP:journals/ita/AvnaimB89}}\\
        & & \Cref{thm:5sum} &\\
        & & $m=\Theta(n)$: $\O((nm)^{(3/2)-\eps})$ &\\
        \hline
    \end{tabular}
\end{table}
\footnotetext{The class of \textit{octolinear} polygons is the subclass of simple polygons whose edges are at an angle $k\frac{\pi}{4}$ for $0\leq k< 8$. We remark that the lower bound requires only six out of all eight edge angles, and the class of orthogonal polygons is trivially contained in the class of octolinear polygons.}

In this paper, we consider the maximum overlap problem in $\bR^2$ for a different well-studied subclass of simple polygons, 
where both $P$ and $Q$ are orthogonal polygons, i.e., simple polygons where every edge is either horizontal or vertical:

\begin{problem}[\rectPolyProb]
    Let $P$ and $Q$ be two orthogonal polygons with $n$ and $m$ vertices respectively. Find $\tau\in \bR^2$ such that $\area(P\cap (Q+\tau))$ is maximum.
\end{problem}

For this problem, we give the first improvement over the nearly three decades old $\O((nm)^2 \log(nm))$ bound from~\cite{DBLP:journals/cviu/MountSW96}. In particular, we prove the following theorem.

\begin{restatable}{theorem}{fastAlg}\label{thm:alg}
    
Let $P$ and $Q$ be orthogonal polygons with $n$ and $m$ vertices, respectively.
There is an algorithm for computing the translation $\tau$ maximizing the overlap $\area(P\cap (Q+\tau))$ in time $\O((nm)^{3/2}\log(nm))$.
\end{restatable}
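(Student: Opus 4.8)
The plan is to reduce the two-dimensional maximization of $f(\tau):=\area(P\cap(Q+\tau))$ to a one-dimensional sweep in which only $\tildeO(\sqrt{nm})$ work is spent per combinatorial event, for a total of $\O((nm)^{3/2}\log(nm))$. First I would partition $P$ into $\O(n)$ interior-disjoint axis-parallel rectangles $R_1,\dots,R_N$ (for instance by extending horizontal chords from the reflex vertices) and, symmetrically, $Q$ into $\O(m)$ rectangles $S_1,\dots,S_M$. Because the parts are interior-disjoint, $\area(P\cap(Q+\tau))=\sum_{i,j}\area(R_i\cap(S_j+\tau))$, and for a single pair the overlap area factors as $\area(R_i\cap(S_j+\tau))=w_{ij}(\tau_x)\,v_{ij}(\tau_y)$, where $w_{ij}$ and $v_{ij}$ are the overlap lengths of the $x$- and $y$-projections; each such factor is a nonnegative trapezoidal piecewise-linear function with $\O(1)$ breakpoints and slopes in $\{-1,0,1\}$. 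Hence $f(\tau)=\sum_{i,j}w_{ij}(\tau_x)\,v_{ij}(\tau_y)$ is a sum of $\O(nm)$ rank-one terms. The breakpoints of the $w_{ij}$ are differences of an $x$-coordinate of $P$ and an $x$-coordinate of $Q$, giving $\O(nm)$ distinct critical values of $\tau_x$, and symmetrically $\O(nm)$ critical values of $\tau_y$. On every cell of the induced grid $f$ is bilinear and therefore maximized at a corner, so it suffices to maximize $f$ over the grid vertices.

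\textbf{From the grid to $\O(nm)$ line maxima.} Evaluating all $\Theta((nm)^2)$ grid vertices is too slow, so instead I would sweep a horizontal line $\tau_y=c$ and maintain $M(c):=\max_{\tau_x}f(\tau_x,c)$. Define the events to be the $\O(nm)$ values of $\tau_y$ at which some $v_{ij}$ changes slope. On any interval between two consecutive events every $v_{ij}$ is affine, so for each fixed $\tau_x$ the map $c\mapsto f(\tau_x,c)$ is affine; hence $M(c)$, a pointwise maximum of affine functions, is convex on that interval and attains its maximum at an endpoint. Therefore $\max_\tau f=\max_c M(c)$ is attained at one of the $\O(nm)$ events, and it suffices to compute $M$ at each event.

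\textbf{A square-root-decomposition data structure (the crux).} For fixed $c$ the function $\tau_x\mapsto f(\tau_x,c)$ is piecewise linear with breakpoints among the $\O(nm)$ critical $\tau_x$-values, and $M(c)$ is its maximum over those breakpoints. Writing the current affine piece of each $v_{ij}$ as $\alpha_{ij}+\beta_{ij}\tau_y$, we get $f(\tau_x,c)=A(\tau_x)+c\,B(\tau_x)$ with $A=\sum_{ij}\alpha_{ij}w_{ij}$ and $B=\sum_{ij}\beta_{ij}w_{ij}$. These two piecewise-linear functions change only at events, and a single event adds a scalar multiple of one trapezoid $w_{ij}$ to each of $A$ and $B$, i.e.\ it alters their slopes at only $\O(1)$ breakpoints. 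The subproblem is thus to maintain $A$ and $B$ under such local slope edits while answering, at the query value $c$, the maximum over all breakpoints $x_s$ of the quantity $A(x_s)+c\,B(x_s)$, which is linear in $c$. I would keep the $\O(nm)$ breakpoints in $\sqrt{nm}$ consecutive blocks of size $\sqrt{nm}$, storing per block a convex structure for the upper envelope of its per-breakpoint lines together with the aggregates needed to carry the accumulated value and slope across blocks. A query then evaluates the blocks from left to right in $\O(\sqrt{nm}\log(nm))$ time, and an update rebuilds the $\O(1)$ affected blocks in $\O(\sqrt{nm}\log(nm))$ time. Summed over the $\O(nm)$ events, and adding the $\tildeO(nm)$ preprocessing for the decomposition, the sorting of the critical values, and the location of the trapezoids' breakpoints, this gives the claimed $\O((nm)^{3/2}\log(nm))$ bound.

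\textbf{Where the difficulty lies.} The delicate part is the data structure. Since adding a trapezoid perturbs $A$ and $B$ by functions that are affine — not constant — in $\tau_x$, the edit shears the per-breakpoint lines by an amount that depends both on the query value $c$ and on the slope accumulated from preceding blocks, so a single lazy offset per block does not suffice. Resolving this so that each block answers a parametric maximum in $\O(\log(nm))$ time and rebuilds in $\O(\sqrt{nm}\log(nm))$ time is the main obstacle, and I expect it to require a per-block upper-envelope-of-planes structure queried along the line traced, as $c$ varies, by the pair (incoming slope, $c$) in parameter space. The remaining ingredients — correctness via the convexity-between-events property and a generic-position treatment of coincident coordinates — are routine once this framework is in place.
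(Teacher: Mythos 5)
Your overall architecture matches the paper's: decompose both polygons into $\O(n)$ and $\O(m)$ interior-disjoint rectangles, observe that the overlap function is a sum of $\O(nm)$ bilinear pieces so that the maximum is attained on the $\O(nm)\times\O(nm)$ grid of critical translations, and then beat the naive $\O((nm)^2)$ evaluation by grouping candidates into $\sqrt{nm}$-sized batches whose maxima are extracted by logarithmic-time queries to a convex structure. You organize this dually to the paper --- the paper partitions the $y$-candidates into $\O(\sqrt{nm})$ groups and builds, for each group and each induced $x$-interval, a static three-dimensional point set that is queried once per critical $x$-value, whereas you sweep over the $y$-events and maintain a dynamic block decomposition of the $x$-breakpoints --- but the operation counts are identical and either organization can be made to work.

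The proposal is, however, incomplete at exactly the step that constitutes the paper's main technical contribution. You correctly observe that a trapezoid update perturbs the per-breakpoint lines $\ell_s(c)=A(x_s)+cB(x_s)$ by an amount that is affine in $x_s$ rather than constant (note also that, although an update changes the \emph{slopes} of $A$ and $B$ at only $\O(1)$ breakpoints, it changes their \emph{values} on a whole contiguous range of breakpoints, which is why lazy per-block information is unavoidable), and you then declare the required per-block structure to be ``the main obstacle'' that you merely ``expect'' to be resolvable by an upper-envelope-of-planes structure. That expectation is never verified, so the running-time claim rests on an unconstructed data structure. The gap does close, and essentially in the way the paper closes it (\Cref{lem:convexhull} together with the algebraic identity in the proof of \Cref{lem:correctness}): the accumulated modification of $A$ over a block fully covered by a single linear piece of every pending trapezoid is a sum of affine functions of $x_s$ and hence itself affine, say $U_\beta+V_\beta x_s$, and likewise $U'_\beta+V'_\beta x_s$ for $B$; the block query then equals $U_\beta+cU'_\beta+\max_s\langle (A^0(x_s),B^0(x_s),x_s),(1,c,V_\beta+cV'_\beta)\rangle$, a single extreme-point query on the three-dimensional convex hull of the points $(A^0(x_s),B^0(x_s),x_s)$, answerable in $\O(\log (nm))$ time after an $\O(\sqrt{nm}\log(nm))$-time rebuild via the Dobkin--Kirkpatrick hierarchy. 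Together with the observation that the sweep processes only $\O(nm)$ trapezoid events in total (each rectangle pair contributes $\O(1)$ of them), this completes your argument and yields the stated $\O((nm)^{3/2}\log(nm))$ bound; as written, though, the proof of the theorem's headline claim is missing its decisive ingredient.
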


We complement our algorithm by a conditional lower bound via the $k$-SUM problem.

\begin{problem}[$k$-SUM]
    Given $k$ sets of positive integers $A_1$, $A_2$, $\ldots$, $A_k$ of size $n$ each, is there $(a_1,\ldots,a_k)\in A_1 \times\ldots\times A_k$ such that $a_1=\sum_{i\geq 2}^k a_i$?
\end{problem}

The corresponding hardness assumption in fine-grained complexity theory is the $k$-SUM-Hypothesis \cite{abboud2013exact, GAJENTAAN1995165}:

\begin{definition}[$k$-SUM Hypothesis]
    For any integer $k$ the $k$-SUM problem cannot be solved in $\O(n^{\lceil k/2\rceil-\eps})$ time for any $\eps>0$.    
\end{definition}

We provide the following lower bound for any algorithm solving the maximum overlap problem for simple polygons, assuming the $k$-SUM Hypothesis.

\begin{restatable}{theorem}{thmFiveSum}\label{thm:5sum}
Let $P$ and $Q$ be polygons with $n$ and $m$ vertices, respectively, where $m\leq n$ and $\delta \in \mathbb{R}^+$. 
Assume that all edges of $P$ and $Q$ are either horizontal, vertical or parallel to the diagonal.
Assuming the $k$-SUM Hypothesis, deciding whether there is a translation $\tau$ such that $\area(P\cap(Q+\tau))>\delta$ cannot be solved in $\O((\max(n^2,nm^2))^{1-\eps})$ time for any $\eps >0$.
\end{restatable}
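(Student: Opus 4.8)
The plan is to reduce from $5$-SUM to the decision version of the overlap problem, choosing the sizes of the five input sets so that the $5$-SUM hardness $N^{3-\eps}$ (note $\lceil 5/2\rceil=3$) translates into the claimed bound; the containment corollary will follow by collapsing two of the sets. Write the target condition as $a_1=a_2+a_3+a_4+a_5$. The guiding principle is that a translation $\tau=(\tau_x,\tau_y)$ supplies exactly two \emph{free} degrees of freedom, so two of the summands (say $a_4,a_5$) will be encoded by the position of $\tau$, while the three remaining sets are encoded combinatorially into $P$ and $Q$. I would lay out $P$ and $Q$ as arrays of thin axis-aligned ``needles'' whose placements encode the integers of $A_1,A_2,A_3$, arranged in a periodic pattern so that near-optimal overlap can only occur when $\tau$ lands on one of a discrete lattice of candidate translations, each lattice point corresponding to a choice of $(a_4,a_5)$.

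The crucial gadget is the diagonal edge: a needle of $Q$ meeting a slope-$1$ edge of $P$ contributes overlap that depends on $\tau_x+\tau_y$ rather than on $\tau_x$ and $\tau_y$ separately (a slope-$(-1)$ edge gives $\tau_x-\tau_y$). This is precisely what lets the geometry \emph{add}: by routing the coordinate-encoded values $a_4,a_5$ through diagonal edges and comparing them against axis-aligned features that encode $a_1,a_2,a_3$, I can make the overlap reach a fixed target value if and only if the five selected integers satisfy $a_1=a_2+a_3+a_4+a_5$. Concretely I would normalize all set elements to polynomially bounded integers, scale them into disjoint coordinate windows so that the $x$- and $y$-constraints do not interfere, and tune the widths and heights of the needles so that each satisfied equality contributes a fixed quantum of area. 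Setting $\delta$ just below the sum of all quanta makes $\area(P\cap(Q+\tau))>\delta$ equivalent to the existence of a $5$-SUM solution.

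For the running-time accounting, I would distribute the five sets between the two polygons and replicate the periodic gadgets so that $P$ ends up with $\Theta(n)$ vertices and $Q$ with $\Theta(m)$ vertices, calibrating the instance size $N$ so that the $5$-SUM lower bound becomes $(nm^2)^{1-\eps}$; in the balanced case $m=\Theta(n)$ this means $N=\Theta(n)$ and the bound reads $(nm)^{3/2-\eps}$, exactly matching \Cref{thm:alg}. Thus a hypothetical overlap algorithm running in $\O((nm^2)^{1-\eps})$ time would solve $5$-SUM in $\O(N^{3-\eps'})$ time, a contradiction. For small $m$ the quantity $\max(n^2,nm^2)$ equals $n^2$, and I would instead run the same construction with two of the five sets reduced to singletons; this collapses it to a $3$-SUM reduction with a polygon $Q$ of $\O(1)$ size and set size $N=\Theta(n)$, giving the $n^{2-\eps}$ bound that also underlies the containment result \Cref{cor:3sum}.

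The main obstacle I anticipate is the \emph{soundness / no-false-positive} analysis: I must prove that a large overlap cannot arise from spurious partial alignments --- from $\tau$ sitting between lattice points, from one coordinate's constraints being over-satisfied while the other is violated, or from needles belonging to different gadget blocks interfering. This forces a careful gap argument establishing a clean additive separation between the ``all constraints satisfied'' area and every other configuration, which in turn dictates the spacing of the coordinate windows and the precise needle dimensions. A secondary technical point is keeping all coordinates integers of magnitude polynomial in $N$, so that building the instance runs in near-linear time and the reduction itself does not dominate the $5$-SUM lower bound.
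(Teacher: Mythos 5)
Your plan follows the same route as the paper: reduce from an imbalanced $5$-SUM with three sets of size $n$ and two of size $m$, encode the sets as combs of thin prongs so that axis-parallel alignments discretize $\tau_x$ and $\tau_y$, and use diagonal edges to impose a constraint of the form $\tau_x+\tau_y=\mathrm{const}$ that couples the two coordinates --- this diagonal gadget is precisely the paper's central new idea, and your threshold choice and the collapse to $3$-SUM for the containment corollary also match.

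Two concrete issues. First, your degree-of-freedom accounting would fail if implemented as written: you say the candidate lattice is indexed by $(a_4,a_5)$ (hence $m^2$ points) and that $a_4,a_5$ are routed through the diagonal edges. With only $m^2$ candidate translations, each surviving lattice point would still require certifying $\exists (a_1,a_2,a_3)$ with $a_1-a_2-a_3=a_4+a_5$, a residual $3$-SUM test that a single area measurement against $\O(n)$ needles cannot perform. The working construction instead puts prongs for $A_2$ (size $n$) on $P$'s $x$-comb and for $A_4$ (size $m$) on $Q$'s $x$-comb, so that alignment forces $\tau_x\approx a_2+a_4$ (and symmetrically $\tau_y\approx a_3+a_5$), giving a lattice of $(nm)^2$ candidates; the diagonal comb then encodes the single size-$n$ set $A_1$ via lines $x+y=a_1$ in $P$ tested against one verifier prong of $Q$ at $x+y=0$. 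Second, the soundness argument you defer is the bulk of the proof; the paper's mechanism is (i) an anchor gadget (a unit square of $Q$ that must land in a large far-away square of $P$, contributing area $1$ and confining $\tau$ to a bounded window), (ii) shifting the $i$th prong of each comb of $P$ by $3i\eps$ with $\eps=1/(100n^2)$ so that at most one prong pair per comb can overlap, capping each comb's contribution at $\eps^2$, and (iii) an integrality argument turning approximate triple alignment into the exact equation $a_1=a_2+a_3+a_4+a_5$. You would also need to spell out the self-reduction from balanced $5$-SUM to the imbalanced variant (split two of the five sets into $n/m$ blocks of size $m$, yielding $(n/m)^2$ subinstances) to justify the $(nm^2)^{1-\eps}$ term of the claimed bound.
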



Our reduction can also be modified to obtain a lower bound for the \emph{polygon containment} problem. While this problem is a special case of the \emph{maximum overlap} problem, it has attracted more attention than the overlap problem.
In the containment problem we want to know whether 
there is some $\tau\in\mathbb R^2$ so that $(Q+\tau)\subset P$ and thus
$\area (P\cap (Q+\tau))=\area (Q)$.
This problem was first studied by Chazelle in 1983~\cite{chazelle1983polygon}, followed by numerous other results. 
Of particular interest, we only mention the results when both $P$ and $Q$ are simple or orthogonal polygons and only translations are allowed; a plethora of results exist when rotations are allowed and for more restricted variants such as when one or both of the polygons are convex; for exhaustive literature reviews on the problem, see the recent papers~\cite{KunnemannPolygonPlacement2021,DBLP:conf/compgeom/ChanH24}.
Avnaim and Boissonnat~\cite{DBLP:journals/ita/AvnaimB89} gave an algorithm for solving polygon containment of two simple polygons in $\O((nm)^2\log(nm))$ time.
Hern{\'{a}}ndez{-}Barrera~\cite{DBLP:journals/ipl/Barrera96} improved the running time to $\O((nm)^2)$ and solved the case where both $P$ and $Q$ are orthogonal in $\O(nm\log (nm))$ time.

As for known lower bounds, Barequet and Har{-}Peled~\cite{DBLP:journals/ijcga/BarequetHardness01} showed that when $P$ and $Q$ both have $n$ vertices, no algorithm has running time $\O(n^{2-\varepsilon})$ under the $3$-SUM Hypothesis.
K{\"{u}}nnemann and Nusser~\cite{KunnemannPolygonPlacement2021} gave various conditional lower bounds for polygon containment and the more general \emph{largest scaled containment} problem, which is to find the largest homothet (i.e., scaled copy) of $Q$ that fits in $P$.
For polygon containment, they showed that when $P$ and $Q$ are orthogonal and have arbitrary numbers $n$ and $m$ of vertices, then no algorithm has running time $\O((nm)^{1-\varepsilon})$ time under the Orthogonal Vectors Hypothesis.
Assuming the $4$-SUM Hypothesis, they showed that for the largest scaled containment problem, no algorithm has running time $\O(n^{2-\varepsilon})$, even when both $P$ and $Q$ are orthogonal and $m=O(1)$.

We improve on these lower bounds by simplifying the reduction behind \Cref{thm:5sum} to show the following.  

\begin{restatable}{theorem}{polyContLowerBound}\label{cor:3sum}
Let $P$ and $Q$ be polygons with $n$ and $\O(1)$ vertices, respectively.
Assume that all edges of $P$ and $Q$ are either horizontal, vertical or parallel to the diagonal.
Assuming the $3$-SUM Hypothesis, deciding whether there is a translation for $Q$ such that $Q$ is contained in $P$ cannot be solved in $\O(n^{2-\varepsilon})$ time for any $\eps>0$.
\end{restatable}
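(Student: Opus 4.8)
The plan is to reduce $3$-SUM to \polyCont. Given three sets $A=\{\alpha_1,\dots,\alpha_n\}$, $B=\{\beta_1,\dots,\beta_n\}$, $C=\{\gamma_1,\dots,\gamma_n\}$ of integers (in the formulation of the excerpt, $C$ plays the role of $A_1$ and $A,B$ of $A_2,A_3$), I want to build in linear time a simple polygon $P$ with $\O(n)$ vertices and a polygon $Q$ with $\O(1)$ vertices, using only horizontal, vertical and diagonal edges, such that some translate of $Q$ fits inside $P$ if and only if there exist $\alpha\in A$, $\beta\in B$, $\gamma\in C$ with $\alpha+\beta=\gamma$. Since the reduction is linear and $Q$ has constant size, an $\O(n^{2-\eps})$ algorithm for containment would decide $3$-SUM in $\O(n^{2-\eps})$ time, contradicting the $3$-SUM hypothesis and proving \Cref{cor:3sum}.

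The guiding idea is to let the two coordinates of the translation $\tau=(\tau_x,\tau_y)$ encode the choice of $\alpha$ and $\beta$, and to let the diagonal edges encode the additive test $\alpha+\beta=\gamma$. Concretely, I aim to design $P$ and $Q$ so that the set of feasible translations $P\ominus Q=\{\tau:Q+\tau\subseteq P\}$ equals the intersection of three slab families: $\{\tau_x\in\tilde A\}$, $\{\tau_y\in\tilde B\}$ and $\{\tau_x+\tau_y\in\tilde C\}$, where $\tilde A=\bigcup_i[\alpha_i-\delta,\alpha_i+\delta]$ and analogously for $\tilde B,\tilde C$, with $\delta=\tfrac14$. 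The first family is a union of vertical strips, the second of horizontal strips, and the third---crucially using slope-$\pm1$ edges---of diagonal strips. Their intersection is a union of tiny cells, one near each point $(\alpha_i,\beta_j)$ for which $\alpha_i+\beta_j$ lies close to some $\gamma_k$; by integrality and $\delta<\tfrac13$, such a cell is nonempty exactly when $\alpha_i+\beta_j=\gamma_k$. Hence $P\ominus Q\neq\emptyset$ iff the $3$-SUM instance is a yes-instance.

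To realize these slab families I would make $Q$ a constant-complexity shape carrying three thin fingers pointing down, left, and along a diagonal direction, joined by a small body. The polygon $P$ is a large connected region equipped with three groups of thin slots placed far apart: an $A$-group of vertical slots at $x=\alpha_i$ into which the down-finger must insert (pinning $\tau_x$ to $\tilde A$ while leaving $\tau_y$ free to slide), a $B$-group of horizontal slots at $y=\beta_j$ for the left-finger, and a $C$-group of diagonal slots constraining $\tau_x+\tau_y$ for the diagonal finger (reflecting the $B$-coordinate if only one diagonal orientation is permitted). Each finger is a thin rectangle of width $w$ entering a slot of width $w+2\delta$, so it commits to exactly one slot and contributes exactly the corresponding slab constraint; the slots are made long enough that the free direction of each finger accommodates the full polynomial range of the other two coordinates. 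Since only horizontal, vertical and diagonal edges appear, $P$ and $Q$ meet the edge-direction requirement, with $|P|=\O(n)$ and $|Q|=\O(1)$.

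The main obstacle I expect is the \emph{assembly}: turning the three slot groups into a single \emph{simple} polygon (connected, no holes) while guaranteeing that the connecting corridors and the finger shafts introduce no extra feasible translations, so that $P\ominus Q$ is exactly the intended intersection of slabs rather than something larger. Concretely, I must verify that, throughout the relevant polynomial-size window of translations, the body of $Q$ and the shafts of its fingers stay strictly inside the open interior of $P$ and never touch a wall, so that containment is governed solely by the three finger tips meshing with their slots. This is the delicate bookkeeping step; the remaining correctness (the integrality/separation argument ruling out spurious solutions for $\delta=\tfrac14$) and the complexity accounting are then routine. Finally, the diagonal slots are exactly the feature an orthogonal construction lacks, which is what lets this reduction reach the $\Omega(n^{2})$ barrier for inputs with diagonal edges even though orthogonal containment admits only an $\Omega((nm)^{1-\eps})$ lower bound.
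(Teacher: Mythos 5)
Your reduction is essentially the paper's: the paper obtains this result by specializing its $(3,2)$-SUM overlap construction to $D=E=\{0\}$, giving a constant-size $Q$ with one downward prong, one sideways prong, and one ``verifier'' square that must each mesh into one of $n$ vertical, horizontal, or diagonal slots of $P$ (pinning $\tau_x$, $\tau_y$, and $\tau_x+\tau_y$ respectively), with integrality providing the separation and thin straight connecting pieces handling the assembly into a single simple polygon. Your three-finger/three-slot-family plan, including the identified delicate step of keeping the connectors strictly interior, matches this construction.
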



We summarize all our results in context in Table~\ref{tab:bounds}, and highlight below some key consequences of our theorems:


\begin{itemize}
    \item The arrangement that is constructed for the polygon overlap problem in \cite{DBLP:journals/cviu/MountSW96} has worst-case size $\Theta((nm)^2)$, even for orthogonal polygons. We break this natural barrier with our $\O((nm)^{3/2} \log nm)$ running time.
    \item Besides improving the known lower bounds for polygon containment for any $m = \O(n^\alpha)$ with $\alpha<1$, we give the first non-trivial lower bound for polygon containment in the regime where $m \in \O(1)$, improving the previous lower bounds from $\O(n^{1-\eps})$ to $\O(n^{2-\eps})$.
    \item We show that polygon containment with edges in three directions (orthogonal and one diagonal) is strictly harder than polygon containment with edges in two directions (orthogonal): we prove that the former cannot be solved in $\O(n^{2-\eps})$ time when $m \in O(1)$, while the latter is solvable in near-linear time in this setting.
    \item With our improved lower bounds, we show that the decades-old algorithms of \cite{DBLP:journals/ita/AvnaimB89, DBLP:journals/cviu/MountSW96, DBLP:journals/ipl/Barrera96} for polygon overlap and containment are all tight up to polylogarithmic factors for $m \in \O(1)$.
\end{itemize}

\subsection{Technical Overview}

We first give an overview of our techniques used to obtain our results.

\subparagraph{Upper bound}

Let $P$ and $Q$ be two orthogonal polygons with $n$ and $m$ vertices respectively. We give the first algorithm improving on the original $\O((nm)^2)$ algorithm from Mount, Silverman and Wu \cite{DBLP:journals/cviu/MountSW96} for non-convex polygons. That is, for the subclass of orthogonal polygons we give an $\tildeO((nm)^{3/2})$ algorithm.

Similar to the approach presented in \cite{DBLP:journals/cviu/MountSW96}, we consider the piecewise polynomial surface of degree at most two describing the function $\tau\mapsto\area(P\cap(Q+\tau))$. We observe that the function splits into $\O(nm)$ rectangles, each with their own bilinear function, and $\tau\mapsto\area(P\cap(Q+\tau))$ can be expressed as the sum of bilinear functions of rectangles containing $\tau$. We may now sweep over these rectangles to compute the translation maximizing the intersection of $P$ and $Q+\tau$ in $\tildeO((nm)^2)$ time. We improve this sweep-line algorithm with a batching technique which groups events for the sweep-line algorithm into $\O(\sqrt{nm})$ groups of size $\O(\sqrt{nm})$ in both $x$- and $y$-direction resulting in $\O(nm)$ small windows in which we identify the maximizing translation in $\tildeO(\sqrt{nm})$ time as follows: Via algebraic manipulations we rewrite the partial sums of bilinear functions, as they would have been encountered and evaluated by the sweep-line algorithm, into a set $V$ of $\O(\sqrt{nm})$ vectors in $\bR^3$. Identifying the maximum among the $\O(nm)$ candidate translations in the small window then reduces to $\O(\sqrt{nm})$ extreme-point queries on the set $V$ which, as $V\subset\bR^3$, can be answered in $\tildeO(\sqrt{nm})$ total time, via the convex hull of $V$. The result is an algorithm with running time in $\tildeO((nm)^{3/2})$. 

Comparing our techniques to the algorithm given by Mount, Silverman and Wu~\cite{DBLP:journals/cviu/MountSW96}, we observe that our approach is similar in spirit in that we consider the piecewise polynomial surface of degree at most two describing the function $\tau\mapsto\area(P\cap(Q+\tau))$. 
As noted in Section~\ref{intro}, Mount, Silverman and Wu showed that the overlap function can have complexity $\Omega((nm)^2)$ even for orthogonal polygons, so their method of explicitly computing the polynomial surface is unlikely to yield improved bounds.
Unlike their algorithm, however, we manage to circumvent the explicit computation of the polynomial surface represented as the sum of bilinear polynomials defined on rectangular regions via our novel recast to extreme-point queries to a three-dimensional convex hull. 
With this, we improve the running time of the algorithm given by Mount, Silverman and Wu by a factor of $\sqrt{nm}$.

Overall, our techniques reduce a geometric problem to an algebraic problem, which we recast into a geometric problem to solve efficiently. Interestingly, there is very little conceptual overlap between the initial geometric problem and the final recast, and little intuition can be gathered from the set $V$ for the underlying \rectPolyProb instance. 

\subparagraph{Lower bounds} 
We complement our algorithm by giving lower bounds for the polygon overlap and for the polygon containment problem. We reduce to the polygon overlap problem from an imbalanced variant of $5$-SUM. In this variant, we are given five sets $A$, $B$, $C$, $D$, and $E$ with $|A| = |B| = |C| = n$ and $|D| = |E| = m$,
and we want to decide whether there exist elements $(a,b,c,d,e)\in A\times B\times C\times D\times E$ such that $a=b+c+d+e$.
We show that this problem cannot be solved in time $\O(\max(n^2, nm^2)^{1-\eps})$ unless the $k$-SUM Hypothesis fails.

Let us now sketch our reduction from this problem to the maximum overlap problem.
We observe that, for two comb-like polygons with their prongs oriented downward, the translation $\tau = (\tau_x, \tau_y)$ maximizing the overlap between both polygons aligns one prong from the translated polygon $Q$ with one prong of the fixed polygon $P$.
If the two aligned prongs are at $x$-coordinate $-d$ and $b$, then the $x$-coordinate of the translation satisfies the equation $\tau_x = b+d$.
Hence, by carefully spacing the prongs of the combs and creating a similar construction for the $y$-coordinate, we can restrict any translation aligning both pairs of combs at once to $\tau_x= b_{\bindex}+d_{\dindex}$ and $\tau_y = c_{\cindex}+e_{\eindex}$ for any $b_{\bindex}\in B$, $c_{\cindex}\in C$, $d_{\dindex}\in D$ and $e_{\eindex}\in E$. 
We now add a diagonal comb to the fixed polygon $P$, where every prong is placed at the (anti-)diagonal line $x+y=a_{\aindex}$ for $a_{\aindex}\in A$, and a single prong to $Q$ at $x+y=0$. 
These prongs are only aligned for translations of the form $\tau_x+\tau_y=a_{\aindex}$.
Then, if $\tau$ is a translation such that all three pairs of combs are aligned properly, we obtain $\tau_x= b_{\bindex}+d_{\dindex}$, $\tau_y = c_{\cindex}+e_{\eindex}$, and $\tau_x+\tau_y=a_{\aindex}$, which implies a solution to the $5$-SUM instance.
The polygons $P$ and $Q$ have $\O(n)$ and $\O(m)$ vertices respectively.
When each comb of $Q$ only has one prong, the construction extends to polygon containment, giving a lower bound based on $3$-SUM.

Comparing our techniques with the techniques of the 3-SUM Hypothesis based lower bound from Barequet and Har{-}Peled~\cite{DBLP:journals/ijcga/BarequetHardness01} and the Orthogonal Vectors Hypothesis based lower bound from K{\"{u}}nnemann and Nusser~\cite{KunnemannPolygonPlacement2021} for the \rectPolyProb problem, we observe the following: Both their techniques essentially consider only one pair of combs, 
restricting the set $\tau=(\tau_x,\tau_y)$ of translations such that $Q+\tau$ is contained in $P$ to one of $nm$ possible $x$-translations. 
Combining this with a second set of combs restricting the $y$-translations is straight-forward, however, there is no interdependence between $\tau_x$ and $\tau_y$, as both combs can be aligned independently of one another, 
giving $\O(nm)$ instead of the desired $\O((nm)^2)$ translations that `need to be checked'. This problem is addressed in the 4-SUM lower bound of~\cite{KunnemannPolygonPlacement2021} by considering scalings additionally to translations, to determine the biggest copy of $Q$ in $P$.
This additional degree of freedom allows to encode more complexity, however, at the cost of showing a lower bound for a more general problem (hence, a weaker bound).
In our work, we address this issue by introducing diagonal combs which encode equations of the form $\tau_x+\tau_y=C$ for some constant $C$. That allows us to couple $\tau_x$ and $\tau_y$ to one another such that the two axis-aligned combs \textit{cannot} be aligned independently from each other.
A somewhat similar technique of encoding addition of a 3-SUM problem via diagonal translations was used by Bringmann and Nusser in~\cite{DBLP:journals/jocg/BringmannN21}.

\section{Rectangles and Unions of Rectangles}

We first discuss a simple sweep-line based algorithm which solves \rectPolyProb in $\tildeO((nm)^2)$ total time, by partitioning both polygons into a set of rectangles and restricting the solution space to $\O((nm)^2)$ candidate translations. This sweep-line based algorithm serves as the basis for an improved algorithm which we present in \Cref{sec:fast}. We begin by analyzing the function $\tau\mapsto\area(P\cap(Q+\tau))$, where $P$ and $Q$ are rectangles.

For this, define for two sets $A$ and $B$ in $\bR$, the set $A-B:=\{a-b \mid (a,b)\in A\times B\}$. Let further $\lambda(\cdot)$ be the Lebesgue measure, which we use to measure the length of intervals in $\bR$. 

\begin{figure}
    \centering
    \includegraphics[width=0.85\linewidth]{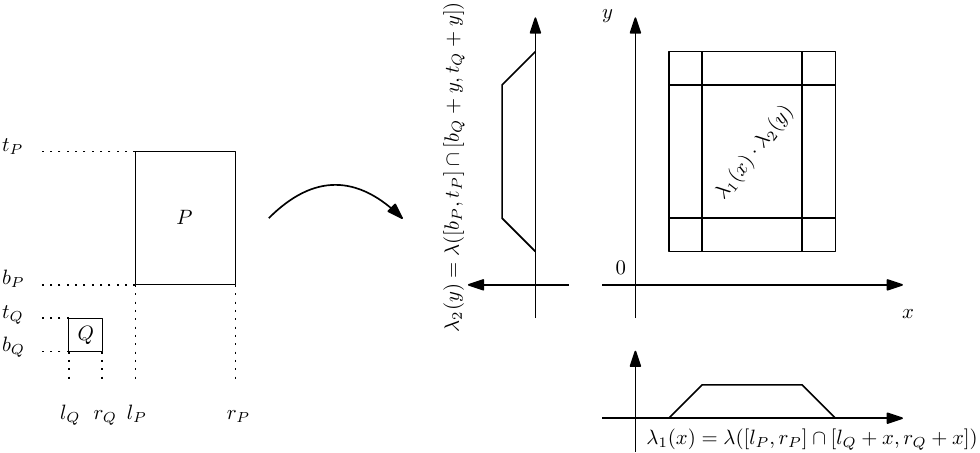}
    \caption{Illustration of the support of the function $\area(P\cap(Q+(x,y))$, and how it partitions into $\O(1)$ (half-open disjoint) rectangular regions.}
    \label{fig:rectrect}
\end{figure}

\begin{lemma}\label{lem:translationfunction}
    Let $P=[l_P,r_P]\times[b_P,t_P]$ and $Q=[l_Q,r_Q]\times[b_Q,t_Q]$ be two rectangles. There are $9$ disjoint half-open rectangles $[l_i,r_i)\times[b_i,t_i)$ and real values $A_i,B_i,C_i$ and $D_i$ computable in $\O(1)$ time such that $l_i,r_i\in \{l_P,r_P\}-\{l_Q,r_Q\}$ and $b_i,t_i\in \{b_P,t_P\}-\{b_Q,t_Q\}$ and
    \[\area(P\cap (Q+(x,y)))=
    \begin{cases}
        A_i + B_i x + C_iy + D_i xy, & \text{if $(x,y)\in [l_i,r_i)\times[b_i,t_i)$ for $i\leq 9$}\\
        0, & \text{otherwise}.
    \end{cases}\]
\end{lemma}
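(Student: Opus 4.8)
The plan is to exploit the fact that the intersection of two axis-aligned rectangles is again an axis-aligned rectangle whose two side lengths are determined \emph{independently} by the $x$- and by the $y$-coordinate. Writing $Q+(x,y) = [l_Q+x,r_Q+x]\times[b_Q+y,t_Q+y]$, the intersection satisfies $P\cap(Q+(x,y)) = \big([l_P,r_P]\cap[l_Q+x,r_Q+x]\big)\times\big([b_P,t_P]\cap[b_Q+y,t_Q+y]\big)$, so its area factors as a product
\[
 \area(P\cap(Q+(x,y))) = f(x)\cdot g(y),
\]
where $f(x):=\lambda([l_P,r_P]\cap[l_Q+x,r_Q+x])$ and $g(y):=\lambda([b_P,t_P]\cap[b_Q+y,t_Q+y])$ are the one-dimensional overlap lengths. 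The whole lemma thus reduces to understanding the single-variable function $f$ (and, symmetrically, $g$).

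Next I would analyze $f$. For fixed endpoints the overlap length is $f(x)=\max\big(0,\ \min(r_P,r_Q+x)-\max(l_P,l_Q+x)\big)$, the classical ``trapezoidal'' function: it is continuous, piecewise affine, and vanishes outside a bounded interval. Its only breakpoints are the four values of $x$ at which an endpoint of $[l_Q+x,r_Q+x]$ coincides with an endpoint of $[l_P,r_P]$, namely $l_P-l_Q,\ l_P-r_Q,\ r_P-l_Q,\ r_P-r_Q$, all of which lie in $\{l_P,r_P\}-\{l_Q,r_Q\}$ (the outer two are the touching points where $f$ leaves/returns to $0$, the inner two are the kinks where $\min$ or $\max$ switches). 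Sorting them as $x_1\le x_2\le x_3\le x_4$, the function $f$ is identically $0$ on $(-\infty,x_1)$ and on $[x_4,\infty)$, and is affine, say $f(x)=\alpha_p+\beta_p x$, on each of the at most three intervals $[x_1,x_2),[x_2,x_3),[x_3,x_4)$. All breakpoints and coefficients are obtained in $\O(1)$ time by sorting and reading off slopes and intercepts. The analysis of $g$ is identical, yielding $y$-breakpoints $y_1\le\dots\le y_4$ in $\{b_P,t_P\}-\{b_Q,t_Q\}$ and affine pieces $g(y)=\gamma_q+\delta_q y$.

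Finally I would form the product over the $3\times 3$ grid of half-open cells. On the rectangle $[x_p,x_{p+1})\times[y_q,y_{q+1})$ we get
\[
 \area(P\cap(Q+(x,y)))=(\alpha_p+\beta_p x)(\gamma_q+\delta_q y)=\alpha_p\gamma_q+\beta_p\gamma_q\,x+\alpha_p\delta_q\,y+\beta_p\delta_q\,xy,
\]
which is exactly the claimed bilinear form with $A_i=\alpha_p\gamma_q$, $B_i=\beta_p\gamma_q$, $C_i=\alpha_p\delta_q$, $D_i=\beta_p\delta_q$, and whose defining intervals have endpoints in the required difference sets. These nine cells are pairwise disjoint by construction and their union is the single rectangle $[x_1,x_4)\times[y_1,y_4)$. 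It then remains to verify the ``otherwise $0$'' branch: if $x\notin[x_1,x_4)$ then $f(x)=0$ (using $f(x_4)=0$ by continuity), and symmetrically for $y$, so at least one factor vanishes whenever $(x,y)$ lies outside the union, giving area $0$ there.

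The only point requiring care — and the main bookkeeping obstacle — is degeneracy: when the two rectangles have equal width (so $l_P-l_Q=r_P-r_Q$) or equal height, some breakpoints coincide and $f$ (resp.\ $g$) genuinely has fewer than three affine pieces; likewise, a degenerate (zero-width or zero-height) rectangle makes the area identically $0$. I would handle all of this uniformly by permitting empty half-open intervals (e.g.\ $[x_p,x_{p+1})=\varnothing$ when $x_p=x_{p+1}$), so that the construction always produces exactly nine rectangles, some possibly empty, with correctness unaffected since the bilinear function assigned to an empty cell is never evaluated.
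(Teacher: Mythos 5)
Your proposal is correct and follows essentially the same route as the paper: factor the area as a product of two one-dimensional overlap-length functions, observe that each is piecewise affine on at most three half-open intervals with breakpoints in the respective difference sets, and multiply the pieces over the $3\times 3$ grid to obtain the nine bilinear cells. Your write-up is somewhat more explicit (the $\max/\min$ formula for $f$, the continuity argument for the ``otherwise $0$'' branch, and the treatment of degenerate coincident breakpoints via empty intervals), but the underlying argument is the same.
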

\begin{proof}
    For the proof refer to \Cref{fig:rectrect}.
    Observe that
    \[\area(P\cap (Q+(x,y))) = \lambda([l_P,r_P]\cap [l_Q+x,r_Q+x])\cdot \lambda([b_P,t_P]\cap [b_Q+y,t_Q+y]).\]
    The continuous function $x\mapsto\lambda([l_P,r_P]\cap [l_Q+x,r_Q+x])$ is a convolution of two one-dimensional interval membership functions and as such can be described by (at most) three disjoint half-open intervals $[l_i,r_i)$ and values $A_i$ and $B_i$ such that
    \[\lambda([l_P,r_P]\cap [l_Q+x,r_Q+x])=
    \begin{cases}
        A_i + B_i x , & \text{if $x\in [l_i,r_i)$ for $i\leq 3$}\\
        0, & \text{otherwise},
    \end{cases}\]
    where $l_i,r_i\in \{l_P,r_P\}-\{l_Q,r_Q\}$.
    Similarly there are (at most) three disjoint half-open intervals $[b_i,t_i)$ and values $C_i$ and $D_i$ such that 
        \[\lambda([b_P,t_P]\cap [b_Q+y,t_Q+y])=
    \begin{cases}
        C_i + D_i y , & \text{if $x\in [b_i,t_i)$ for $i\leq 3$}\\
        0, & \text{otherwise},
    \end{cases}\]
    with $b_i,t_i\in \{b_P,t_P\}-\{b_Q,t_Q\}$.
    Hence, the fact that $(A_i + B_ix)\cdot(C_j + D_jy) = A_iC_j + B_iC_jx + A_iD_jy + B_iD_jxy$ for $i,j\leq 3$ implies the claim. Finally, all values can be computed in $\O(1)$ time concluding the proof.
\end{proof}

\begin{corollary}\label{cor:translationslabs}
    Let $P=[l_P,r_P]\times[b_P,t_P]$ and $Q=[l_Q,r_Q]\times[b_Q,t_Q]$ be two rectangles. There are $18$ rectangular half-open regions $[l_i,r_i)\times[b_i,\infty)$ together with real values $A_i,B_i,C_i$ and $D_i$ computable in $\O(1)$ time such that
    \[\area(P\cap (Q+(x,y)))=\sum_i\mathbbm{1}[(x,y)\in[l_i,r_i)\times[b_i,\infty)]\left(A_i + B_i x + C_iy + D_i xy\right)\]
    Further, $l_i,r_i\in \{l_P,r_P\}-\{l_Q,r_Q\}$ and $b_i\in \{b_P,t_P\}-\{b_Q,t_Q\}$.
\end{corollary}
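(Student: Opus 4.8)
The plan is to reduce the bounded rectangular regions of \Cref{lem:translationfunction} to upward-unbounded slabs via a standard difference-of-rays decomposition in the $y$-coordinate. First I would invoke \Cref{lem:translationfunction} to write
\[\area(P\cap(Q+(x,y)))=\sum_{i\leq 9}\mathbbm{1}[(x,y)\in[l_i,r_i)\times[b_i,t_i)]\left(A_i+B_ix+C_iy+D_ixy\right),\]
with $l_i,r_i\in\{l_P,r_P\}-\{l_Q,r_Q\}$ and $b_i,t_i\in\{b_P,t_P\}-\{b_Q,t_Q\}$, all computable in $\O(1)$ time.

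The key observation is the telescoping identity for half-open intervals: for every $y$,
\[\mathbbm{1}[y\in[b_i,t_i)]=\mathbbm{1}[y\in[b_i,\infty)]-\mathbbm{1}[y\in[t_i,\infty)].\]
Multiplying by the $x$-indicator $\mathbbm{1}[x\in[l_i,r_i)]$, each bounded rectangle $[l_i,r_i)\times[b_i,t_i)$ splits into two upward slabs $[l_i,r_i)\times[b_i,\infty)$ and $[l_i,r_i)\times[t_i,\infty)$ that share the same horizontal extent.

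I would then assign to the slab with lower bound $b_i$ the coefficients $(A_i,B_i,C_i,D_i)$ and to the slab with lower bound $t_i$ the negated coefficients $(-A_i,-B_i,-C_i,-D_i)$, so that the contributions of the two slabs combine to reproduce exactly the bilinear function of \Cref{lem:translationfunction} on $[l_i,r_i)\times[b_i,t_i)$ and cancel elsewhere. Summing over the nine original rectangles yields the desired $18$ slabs. The membership constraints carry over directly: the horizontal endpoints $l_i,r_i$ remain in $\{l_P,r_P\}-\{l_Q,r_Q\}$, while each new lower bound equals some $b_i$ or $t_i$ from \Cref{lem:translationfunction} and hence lies in $\{b_P,t_P\}-\{b_Q,t_Q\}$; the new coefficients are obtained in $\O(1)$ time by negation.

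I do not expect a genuine obstacle here, as the statement is essentially a bookkeeping rewrite of \Cref{lem:translationfunction}. The only point requiring care is that the decomposition must preserve the exact half-open boundaries so the indicators cancel \emph{pointwise} rather than merely almost everywhere; adopting the convention $[b_i,t_i)=[b_i,\infty)\setminus[t_i,\infty)$ handles this cleanly, and the identical $x$-extent $[l_i,r_i)$ on both slabs ensures the cancellation is exact.
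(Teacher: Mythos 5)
Your proposal is correct and matches the paper's proof exactly: both decompose each of the nine bounded rectangles from \Cref{lem:translationfunction} into the difference of two upward-unbounded slabs $[l_i,r_i)\times[b_i,\infty)$ and $[l_i,r_i)\times[t_i,\infty)$, with the second carrying negated coefficients, yielding the $18$ slabs. The attention you give to exact pointwise cancellation of the half-open indicators is the same convention the paper relies on implicitly.
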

\begin{proof}
    This is an immediate consequence of \Cref{lem:translationfunction} together with the fact that for any $l_i$, $r_i$, $b_i$, $t_i$, $A_i$, $B_i$, $C_i$, $D_i$, $x$ and $y$ it holds that
    \begin{align*}
        &\mathbbm{1}[(x,y)\in[l_i,r_i)\times[b_i,t_i)](A_i + B_i x + C_iy + D_i xy)\\
        =&\mathbbm{1}[(x,y)\in[l_i,r_i)\times[b_i,\infty)](A_i + B_i x + C_iy + D_i xy)\\
        &-\mathbbm{1}[(x,y)\in[l_i,r_i)\times[t_i,\infty)](A_i + B_i x + C_iy + D_i xy).
    \end{align*}
\end{proof}

Next we show that the maximum of $\tau\mapsto\area(P\cap(Q+\tau))$ for orthogonal polygons $P$ and $Q$ is attained by one of only $\O((nm)^2)$ easily computable candidate translations.

\begin{lemma}\label{lem:polygonCutting}
    Let $P$ be an orthogonal polygon with $n$ vertices. In $\O(n\log n)$ time one can compute $\O(n)$ interior-disjoint rectangles $R_i$ such that $P=\bigcup_i R_i$. Further the coordinates of $R_i$ are a subset of the coordinates of $P$.
\end{lemma}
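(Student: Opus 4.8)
The plan is to decompose $P$ by extending a horizontal chord into the interior from every reflex vertex, and then to argue that the resulting planar subdivision consists of $\O(n)$ rectangles. Since $P$ is orthogonal, its boundary edges alternate between horizontal and vertical, so every vertex is incident to exactly one horizontal and one vertical edge and has interior angle either $90^\circ$ (convex) or $270^\circ$ (reflex). The key geometric observation is that at a reflex vertex $v$ the continuation of the incident horizontal edge past $v$ always points into the interior of $P$, since the reflex angle opens up precisely on that side. First I would, for each reflex vertex $v$, define its \emph{chord} as the maximal horizontal segment starting at $v$, heading in this inward direction, and ending at the first point of $\partial P$ it meets; this far endpoint lies on a vertical edge of $P$.

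Next I would prove that adding all these chords cuts $P$ into rectangles. It suffices to show that no face of the resulting subdivision has a reflex corner, since a bounded simply connected rectilinear region all of whose corners are convex has exactly four corners and is therefore a rectangle. A reflex corner of a face could only arise at (i) a reflex vertex of $P$, or (ii) a point where a chord meets $\partial P$ or another chord. For (i), the chord through $v$ is collinear with the incident horizontal edge, so together they lie on one horizontal line; this line locally splits the $270^\circ$ angle at $v$ into a $90^\circ$ corner for one incident face and a straight ($180^\circ$, hence non-corner) passage for the other, so $v$ is no longer reflex in either face. For (ii), a chord meets a vertical edge perpendicularly and thus creates only $90^\circ$ corners in the two incident faces. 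Hence every face is a rectangle, and the faces of a subdivision are interior-disjoint with union $P$.

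For the count, each chord is a simple segment whose two endpoints lie on the boundary of the single face it traverses, so adding it increases the number of faces by exactly one; as chords at distinct heights are disjoint and there are at most $n$ reflex vertices, the number of faces is at most $n+1 = \O(n)$. The coordinate condition is then immediate: every chord sits at the $y$-coordinate of a vertex of $P$, and its two endpoints have $x$-coordinates equal to vertex $x$-coordinates (one endpoint is $v$ itself, the other lies on a vertical edge of $P$), so every rectangle $R_i$ uses only coordinates already present in $P$.

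Finally, for the running time I would compute all chords by a single bottom-to-top plane sweep: maintain the vertical edges currently crossed by the sweep line in a balanced search tree keyed by $x$-coordinate, inserting and deleting each edge at its endpoints, and at each reflex vertex perform one predecessor/successor query to locate the vertical edge its chord hits. This processes $\O(n)$ events in $\O(\log n)$ each, for $\O(n\log n)$ total. The main obstacle is the correctness argument of the second paragraph---in particular verifying that the inward horizontal extensions resolve every reflex vertex without creating new ones---together with the degenerate cases where a chord is collinear with a horizontal edge at the same height or where chords from two facing reflex vertices overlap; these must be handled by a consistent tie-breaking of equal coordinates so that the subdivision remains a set of interior-disjoint rectangles and the $\O(n)$ bound is preserved.
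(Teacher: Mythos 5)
Your proposal is correct and follows essentially the same approach as the paper: shooting horizontal rays inward from each reflex vertex to cut $P$ into $\O(n)$ rectangular slabs whose corners are either input vertices or intersections of these rays with vertical edges, computed by a plane sweep in $\O(n\log n)$ time. You supply more detail than the paper's (very terse) proof, and correctly flag the degenerate collinear-chord cases that need consistent tie-breaking.
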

\begin{proof}
    This can be achieved via standard techniques, first sorting all the input coordinates according to their $y$-coordinate and then shooting horizontal rays from every reflex vertex (i.e., a vertex with interior angle $\frac{3\pi}{2})$, cutting the polygon into $\O(n)$ horizontal slabs, where every vertex defining such a slab is either an input vertex, or the intersection of a vertical edge of $P$ with a horizontal ray rooted at a reflex vertex. As such every point is comprised of coordinates of the input polygon. 
\end{proof}

\begin{lemma}\label{lem:discretization}
    Let $P$ and $Q$ be orthogonal polygons. Let $P_x,P_y,Q_x$ and $Q_y$ be the set of $x$- and $y$-coordinates of $P$ and $Q$ respectively. For $X=P_x-Q_x$ and $Y=P_y-Q_y$ it holds that
    \[\max_{\tau\in\bR^2}\area(P\cap(Q+\tau))=\max_{(x,y)\in X\times Y}\area(P\cap(Q+(x,y))).\]
\end{lemma}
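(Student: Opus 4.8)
The plan is to decompose both $P$ and $Q$ into axis-aligned rectangles using \Cref{lem:polygonCutting}, apply \Cref{cor:translationslabs} to each pair of rectangles, and argue that the resulting overlap function is piecewise-bilinear on a grid whose breakpoints lie exactly in $X\times Y$. First I would write $P=\bigcup_a R_a$ and $Q=\bigcup_b S_b$ as unions of interior-disjoint rectangles, where by \Cref{lem:polygonCutting} every coordinate of every $R_a$ lies in $P_x\cup P_y$ and every coordinate of every $S_b$ lies in $Q_x\cup Q_y$. Since the rectangles within each polygon are interior-disjoint, additivity of area gives
\[
\area(P\cap(Q+(x,y)))=\sum_{a}\sum_{b}\area\bigl(R_a\cap(S_b+(x,y))\bigr),
\]
so it suffices to understand each pairwise term.

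Next I would invoke \Cref{lem:translationfunction} on each pair $(R_a,S_b)$: the pairwise overlap is a bilinear function $A_i+B_ix+C_iy+D_ixy$ on each of at most nine half-open rectangles $[l_i,r_i)\times[b_i,t_i)$, and crucially the breakpoints satisfy $l_i,r_i\in\{l_{R_a},r_{R_a}\}-\{l_{S_b},r_{S_b}\}$ and $b_i,t_i\in\{b_{R_a},t_{R_a}\}-\{b_{S_b},t_{S_b}\}$. Because the rectangle coordinates are drawn from $P_x,P_y$ (for $P$) and $Q_x,Q_y$ (for $Q$), every $x$-breakpoint lies in $P_x-Q_x=X$ and every $y$-breakpoint lies in $P_y-Q_y=Y$. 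Summing over all $a,b$, the full overlap function is piecewise-bilinear, and the set of all its $x$-breakpoints is a subset of $X$ while the set of all its $y$-breakpoints is a subset of $Y$. These breakpoints induce a grid decomposition of the plane into (half-open) cells, on each of which $\area(P\cap(Q+(x,y)))$ equals a single bilinear function $\alpha+\beta x+\gamma y+\delta xy$.

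Finally I would argue that the maximum of a bilinear function over a closed axis-aligned rectangular cell is attained at a corner of that cell. The key observation is that for fixed $y$ the function is linear (hence monotone) in $x$, and for fixed $x$ it is linear in $y$; maximizing a function that is separately monotone in each coordinate over a box pushes the optimum to a vertex. Since each cell's corners have $x$-coordinate in $X$ and $y$-coordinate in $Y$, and since the global maximum over $\bR^2$ is attained on the closure of some cell (the function has bounded support and is continuous, so a maximizer exists and lies in the closure of one of the finitely many cells), the maximum is realized at some $(x,y)\in X\times Y$. The reverse inequality $\max_{X\times Y}\le\max_{\bR^2}$ is trivial, giving equality.

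The main obstacle I expect is the careful handling of the half-open versus closed cells and the continuity argument at cell boundaries: \Cref{lem:translationfunction} gives half-open rectangles, so a breakpoint value in $X$ or $Y$ may be the \emph{open} endpoint of the cell in whose interior one would like to place the maximizer. Resolving this requires using continuity of $\tau\mapsto\area(P\cap(Q+\tau))$ (which follows since it is a finite sum of continuous pairwise overlap functions) to pass from the supremum on a half-open cell to a value attained at a grid point on its boundary. I would phrase this by noting that the closure of each cell has all four corners in $X\times Y$, and by continuity the value at such a corner is the limit of values inside the cell, so no loss occurs when restricting to $X\times Y$.
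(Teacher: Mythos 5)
Your proposal is correct and follows essentially the same route as the paper: decompose both polygons into rectangles via \Cref{lem:polygonCutting}, use \Cref{lem:translationfunction} to get a piecewise-bilinear overlap function with breakpoints in $X\times Y$, and use linearity in each coordinate together with continuity across the half-open cell boundaries to push a maximizer to a grid point. The only cosmetic difference is that you package the last step as ``the maximum of a bilinear function over a closed cell is attained at a corner,'' whereas the paper moves one coordinate at a time (fixing $y^*$ and arguing the function is affine in $x$ on $[x_l,x_r]$, then symmetrically in $y$) --- the underlying argument is identical.
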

\begin{proof}
    By \Cref{lem:translationfunction} and \Cref{lem:polygonCutting} there are $\O(nm)$ rectangular half-open disjoint regions $[l_i,r_i)\times[b_i,t_i)$ together with real values $A_i,B_i,C_i$ and $D_i$ such that for any $(x,y)\in\bR^2$
    \[\area(P\cap(Q+(x,y)))= \sum_{i}\mathbbm{1}[(x,y)\in [l_i,r_i)\times[b_i,t_i)]\cdot\left(A_i + B_i x + C_iy + D_i xy\right),\]
    with $l_i,r_i\in X$ and $b_i,t_i\in Y$. Let now $\tau^*=(x^*,y^*)$ be such that $\area(P\cap(Q+\tau^*))=\max_{\tau\in\bR^2}\area(P\cap(Q+\tau))$. Assume that $x^*\not\in X$. We will show that there is some $\hat{x}\in X$ such that for $\hat{\tau}=(\hat{x},y^*)$ it holds that $\area(P\cap(Q+\hat{\tau}))\geq \area(P\cap(Q+\tau^*))$. Note that this, together with a symmetric argument for the $y$-coordinate, implies the claim.

    As $x^*$ is not in $X$, there are two consecutive elements $x_l,x_r\in X$ with $x_l<x^*<x_r$.
    Let $\mathcal{R}$ be the set of all indices $i$ such that $\tau^*$ is in the half-open disjoint regions $[l_i,r_i)\times[b_i,t_i)$. 
    Thus for all $x\in[x_l,x_r)$ it holds that
    \begin{align*}
        \area(P\cap(Q+(x,y^*)))&= \sum_{i\in\mathcal{R}}\left(A_i + B_i x + C_iy^* + D_i xy^*\right)
    \end{align*}
    In fact, by continuity of $\tau\mapsto \area(P\cap(Q+\tau))$ this even holds for all $x\in [x_l,x_r]$. Let $B=\sum_{i\in\mathcal{R}}B_i$ and $D=\sum_{i\in\mathcal{R}}D_i$. As $\frac{\partial}{\partial x}\area(P\cap(Q+(x,y^*))) = B+Dy^*$ is constant for all $x\in (x_l,x_r)$, at least one of $\area(P\cap(Q+(x_l,y^*)))$ and $\area(P\cap(Q+(x_r,y^*)))$ is at least $\area(P\cap(Q+(x^*,y^*)))$. As both $x_l$ and $x_r$ are in $X$, this implies the claim.
\end{proof}

Next, we represent the function $\tau\mapsto\area(P\cap(Q+\tau))$ via $\O(nm)$ \textit{translation slabs}, i.e., unbounded rectangular areas $[l,r)\times[b,\infty)$ endowed with a bilinear function, such that the function $\tau\mapsto\area(P\cap(Q+\tau))$ is the sum of bilinear functions of translation slabs containing $\tau$ evaluated at $\tau$.

\begin{definition}[translation slabs]
    Let $P$ and $Q$ be two orthogonal polygons. A set of half-open rectangles $\{[l_i,r_i)\times[b_i,\infty)\mid i\in I\}$ each endowed weights $A_i$, $B_i$, $C_i$, and $D_i$ for $i\in I$ is called a set of translation slabs of $P$ and $Q$ if for every $(x,y) \in \bR^2$
    \begin{equation*}
        \area(P\cap(Q+(x,y)))=\sum_{i\in I}\mathbbm{1}[(x,y)\in [l_i,r_i)\times[b_i,\infty)]\cdot\left(A_i + B_i x + C_i y + D_i xy\right).
    \end{equation*}
\end{definition}

\begin{lemma}\label{lem:computeSlabs}
    Let $P$ and $Q$ be two orthogonal polygons with $n$ and $m$ vertices. Let $P_x,P_y,Q_x$ and $Q_y$ be the set of $x$- and $y$-coordinates of $P$ and $Q$ respectively. Let $X=P_x-Q_x$, $Y=P_y-Q_y$. There is an algorithm which computes a set of $\O(nm)$ translation slabs of $P$ and $Q$ in $\O(nm+(n+m)\log(n+m))$ time, where every translation slab is defined by $l,r\in X$ and $b\in Y$ as $[l,r)\times[b,\infty)$.
\end{lemma}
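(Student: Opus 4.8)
The plan is to reduce everything to the two-rectangle case already handled by \Cref{cor:translationslabs}, by first cutting both polygons into rectangles. Concretely, I would apply \Cref{lem:polygonCutting} to $P$ and to $Q$ separately, obtaining interior-disjoint families of $\O(n)$ rectangles $R_i$ with $P=\bigcup_i R_i$ and $\O(m)$ rectangles $S_j$ with $Q=\bigcup_j S_j$. By the same lemma, every $x$-coordinate of each $R_i$ lies in $P_x$ and every $y$-coordinate in $P_y$, and likewise the coordinates of each $S_j$ lie in $Q_x$ and $Q_y$.

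The key structural observation is that the overlap area is additive over rectangle pairs. Since the $R_i$ are interior-disjoint and, for any fixed $\tau$, the translates $S_j+\tau$ are interior-disjoint, the sets $R_i\cap(S_j+\tau)$ are pairwise interior-disjoint over all pairs $(i,j)$, and their union is exactly $P\cap(Q+\tau)$. Hence, by additivity of area,
\[
\area(P\cap(Q+\tau))=\sum_{i,j}\area\bigl(R_i\cap(S_j+\tau)\bigr)
\]
for every $\tau\in\bR^2$. This rewrites the polygon problem as a sum of $\O(nm)$ rectangle problems, each of which is already understood.

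Next I would expand each summand via \Cref{cor:translationslabs}: for every pair $(i,j)$ it yields $18$ slabs $[l,r)\times[b,\infty)$ with weights $A,B,C,D$ whose indicator-weighted bilinear contributions sum to $\area(R_i\cap(S_j+\tau))$. Collecting the slabs of all pairs gives $18\cdot\O(nm)=\O(nm)$ slabs whose total indicator-weighted sum equals $\area(P\cap(Q+\tau))$, which is precisely the defining property of a set of translation slabs. The boundary-membership claim then follows by tracing coordinates: \Cref{cor:translationslabs} guarantees $l,r\in\{l_{R_i},r_{R_i}\}-\{l_{S_j},r_{S_j}\}$ and $b\in\{b_{R_i},t_{R_i}\}-\{b_{S_j},t_{S_j}\}$, and since the rectangle coordinates lie in $P_x,Q_x$ (respectively $P_y,Q_y$), these differences lie in $P_x-Q_x=X$ and $P_y-Q_y=Y$.

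For the running time I would charge the two invocations of \Cref{lem:polygonCutting} as $\O(n\log n+m\log m)=\O((n+m)\log(n+m))$, and the enumeration of all $\O(nm)$ rectangle pairs with $\O(1)$ work each (producing $18$ slabs per pair) as $\O(nm)$, giving the claimed $\O(nm+(n+m)\log(n+m))$ bound. I do not expect a genuine obstacle: the argument is a clean composition of the earlier lemmas. The only points needing care are verifying the interior-disjointness that licenses the additive decomposition and checking that the coordinate memberships propagate into $X$ and $Y$; both are routine but should be stated explicitly.
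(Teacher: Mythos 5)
Your proposal is correct and follows essentially the same route as the paper's proof: decompose both polygons into rectangles via \Cref{lem:polygonCutting}, use additivity of area over the $\O(nm)$ rectangle pairs, and expand each pair into at most $18$ slabs via \Cref{cor:translationslabs}, with the same running-time accounting. Your additional remarks on interior-disjointness and on tracing the coordinates into $X$ and $Y$ are details the paper leaves implicit but are verified correctly here.
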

\begin{proof}
    Via \Cref{lem:polygonCutting}, the algorithm first computes a decomposition of $P$ and $Q$ into $\O(n)$ and $\O(m)$ rectangles in $\O((n+m)\log (n + m))$ total time. By slight abuse of notation let us assume that they are exactly $n$ and $m$ rectangles. Let $P_1,\ldots,P_n$ be the rectangles from the decomposition of $P$ and $Q_1,\ldots,Q_m$ be the rectangles from the decomposition of $Q$. Then for every $(x,y)\in\bR^2$
    \[\area(P\cap(Q+(x,y))=\sum_{i=1}^n\sum_{j=1}^m\area(P_i\cap(Q_j+(x,y)).\]
    For every $i\leq n$ and $j\leq m$, let $R_{i,j,1},\ldots,R_{i,j,K}$ be the $K\leq 18$ rectangular half-open regions from \Cref{cor:translationslabs} with values $A_{i,j,k},B_{i,j,k},C_{i,j,k}$ and $D_{i,j,k}$ for $1\leq k\leq K$. Then 
    \[\area(P\cap(Q+(x,y))=\sum_{i=1}^n\sum_{j=1}^m\sum_{k=1}^K\mathbbm{1}[(x,y)\in R_{i,j,k}]\left(A_{i,j,k} + B_{i,j,k}x+C_{i,j,k}y+D_{i,j,k}xy\right).\]
    Note that by \Cref{lem:polygonCutting} these translation slabs can be computed in time $\O(nm+(n+m)\log(n+m))$ total time.
\end{proof}

Finally, with the representation of $\tau\mapsto\area(P\cap(Q+\tau))$ from \Cref{lem:computeSlabs}, we present an algorithm solving \rectPolyProb in $\tildeO((nm)^2)$ time. 
For this, denote the set of translation slabs output by the algorithm from \Cref{lem:computeSlabs} by $\mathcal{R}(P,Q)$, where $R\in\mathcal{R}(P,Q)$ is endowed with the weights $A_R,B_R,C_R$ and $D_R$. 
Note that $\mathcal{R}(P,Q)$ is a set of rectangular regions in \emph{translation space}, unlike the set of rectangles obtained by the decomposition of both $P$ and $Q$ via \Cref{lem:polygonCutting}.

\begin{lemma}\label{lem:query}
    Let $P$ and $Q$ be orthogonal polygons with $n$ and $m$ vertices respectively. Let $P_x,P_y,Q_x$ and $Q_y$ be the set of $x$- and $y$-coordinates of $P$ and $Q$ respectively. Let $X=P_x-Q_x$, $Y=P_y-Q_y$ and let $\mathcal{Q}\subset X\times Y$ be given. There is an algorithm that for every $q\in \mathcal{Q}$ computes \(A_q=\sum_{R\in\mathcal{R}(P,Q)}\mathbbm{1}[q\in R]A_R\), \(B_q=\sum_{R\in\mathcal{R}(P,Q)}\mathbbm{1}[q\in R]B_R\),\(C_q=\sum_{R\in\mathcal{R}(P,Q)}\mathbbm{1}[q\in R]C_R\), and \(D_q=\sum_{R\in\mathcal{R}(P,Q)}\mathbbm{1}[q\in R]D_R\),
    in total time $\O(nm\log(nm) + |Q|\log(nm))$.
\end{lemma}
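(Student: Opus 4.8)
The plan is to recast the problem as an offline \emph{weighted interval stabbing} problem and to solve it by sweeping over the $y$-coordinate while maintaining a segment tree over the $x$-coordinates. Observe that a query point $q=(x_q,y_q)$ lies in a translation slab $[l,r)\times[b,\infty)$ exactly when $l\le x_q<r$ and $y_q\ge b$. Since every slab is unbounded from above, the set of slabs containing a point grows \emph{monotonically} as the height increases: a slab with lower boundary $b$ becomes relevant once the sweep line reaches height $b$ and never stops being relevant. This monotonicity is the crucial structural feature, because it means the sweep only ever performs insertions and never deletions, so a plain additively-weighted segment tree suffices.

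Concretely, I would create one \emph{activation} event at height $b$ for each slab in $\mathcal{R}(P,Q)$ and one \emph{query} event at height $y_q$ for each $q\in\mathcal{Q}$, and then sort all $\O(nm+|\mathcal{Q}|)$ events by $y$, breaking ties so that activations at a given height are processed before queries at the same height (this correctly counts the slabs with $b=y_q$, whose range $[b,\infty)$ is closed at the bottom). I maintain a segment tree whose leaves are the elementary intervals between consecutive abscissae drawn from $\{l_i\}\cup\{r_i\}$ together with the query $x$-coordinates; since $l_i,r_i\in X$ and $x_q\in X$, and $|X|\le nm$, the tree has $\O(nm+|\mathcal{Q}|)$ leaves, and each node stores a running $4$-tuple of accumulated weights $(A,B,C,D)$.

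When an activation event for a slab $[l,r)$ with weights $(A_R,B_R,C_R,D_R)$ is processed, I add this tuple to each of the $\O(\log(nm))$ canonical nodes whose union is exactly the half-open interval $[l,r)$. When a query event at $x_q$ is processed, I traverse the root-to-leaf path to the elementary interval containing $x_q$ and sum the stored tuples along the way; by the canonical-decomposition property of the segment tree this sum equals $(A_q,B_q,C_q,D_q)$, the aggregate over all currently active slabs containing $x_q$, which, by the monotonicity observation, are precisely the slabs of $\mathcal{R}(P,Q)$ containing $q$.

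The step requiring the most care is the discretization at the leaves: the half-open convention for both the $x$-intervals $[l,r)$ and the bottom-closed $y$-ranges $[b,\infty)$ must be respected consistently, so that a point lying on a slab boundary is charged to exactly the correct side. Once this bookkeeping is fixed, the complexity is immediate: sorting costs $\O((nm+|\mathcal{Q}|)\log(nm))$, the $\O(nm)$ insertions cost $\O(nm\log(nm))$ in total, and the $|\mathcal{Q}|$ point-sum queries cost $\O(|\mathcal{Q}|\log(nm))$ in total. Since $\mathcal{Q}\subseteq X\times Y$ forces $|\mathcal{Q}|\le(nm)^2$, we have $\log(nm+|\mathcal{Q}|)=\O(\log(nm))$, and the overall running time is $\O(nm\log(nm)+|\mathcal{Q}|\log(nm))$, as claimed.
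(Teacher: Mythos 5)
Your proposal is correct and follows essentially the same approach as the paper: a sweep over the $y$-coordinates that exploits the upward-unbounded (insertion-only) structure of the slabs, combined with a binary/segment tree over $X$ in which each slab's weights are added to the $\O(\log(nm))$ canonical nodes covering $[l,r)$ and each query is answered by summing along a root-to-leaf path. Your treatment of tie-breaking and the half-open boundary conventions is a welcome extra level of care, but the underlying argument and the complexity accounting match the paper's proof.
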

\begin{proof}
    We imagine sweeping a horizontal line from $-\infty$ to $\infty$ over the values of $Y$. During this sweep we maintain a binary tree with leaves corresponding to the values in $X$. 
    For any node $v$ of the tree let $X_v$ be the set of leaves in the tree rooted at $v$. At a given sweep line event $y$ for a given node $v$ with parent $p$ let $\mathcal{R}_v$ be the subset of the translations slabs $R\in \mathcal{R}(P,Q)$ such that $X_v\times\{y\}\subset R$ and $X_p\times\{y\}\not\subset R$.
    For any node $v$ we maintain the values $A_v=\sum_{r\in \mathcal{R}_v}A_r$, $B_v=\sum_{r\in \mathcal{R}_v}B_r$, $C_v=\sum_{r\in \mathcal{R}_v}C_r$ and $D_v=\sum_{r\in \mathcal{R}_v}D_r$. Observe that initially all $A_v=B_v=C_v=D_v=0$ can correctly be computed. Now, at every sweep-line event $y$ for every $R=[l,r)\times[b,\infty)\in\mathcal{R}(P,Q)$ such that $b=y$ we add $A_R$, $B_R$, $C_R$ and $D_R$ to $A_v$, $B_v$, $C_v$ and $D_v$ of the roots $v$ of the $\O(\log(nm))$ maximal subtrees contained in $[l,r)$ maintaining the values correctly, as by \Cref{lem:polygonCutting} every translation slab has its lower boundary at some $y\in Y$. This maintenance takes $\O(nm\log(nm))$ total time.
    
    For $y\in Y$ let $\mathcal{Q}_y$ be the subset of elements $q=(q_x,q_y)$ in $\mathcal{Q}$ such that $q_y=y$. When the sweep-line has finished handling all translation slabs whose lower boundary is $y$, we can output $A_q$, $B_q$, $C_q$ and $D_q$ for any $q=(q_x,q_y)\in \mathcal{Q}_y$ by traversing the tree from root to the leaf representing $q_x$ summing all $A_v$, $B_v$, $C_v$ and $D_v$ along the path. As this path has length $\O(\log(nm))$, the values $A_q$, $B_q$, $C_q$ and $D_q$ can be computed in time $\O(\log(nm))$ per element in $\mathcal{Q}_y$ and hence $\mathcal{Q}$ throughout the entire sweep-line algorithm concluding the proof. 
\end{proof}

\begin{corollary}\label{cor:trivial}
    Let $P$ and $Q$ be orthogonal polygons with $n$ and $m$ vertices respectively. There is an algorithm that computes $\max_{\tau\in\bR^2}\area(P\cap(Q+\tau))$ in $\O((nm)^2\log(nm))$ time.
\end{corollary}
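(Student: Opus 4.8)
The plan is to assemble the corollary directly from the discretization result and the batched slab-query machinery established above. By \Cref{lem:discretization}, the maximum of $\tau\mapsto\area(P\cap(Q+\tau))$ is attained at some point of $X\times Y$, where $X=P_x-Q_x$ and $Y=P_y-Q_y$. Since $|P_x|,|P_y|=\O(n)$ and $|Q_x|,|Q_y|=\O(m)$, we have $|X|,|Y|=\O(nm)$, so the candidate set $X\times Y$ has size $\O((nm)^2)$. It therefore suffices to evaluate the overlap function at each of these $\O((nm)^2)$ candidate translations and return the largest value.

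First I would compute the set $\mathcal{R}(P,Q)$ of $\O(nm)$ translation slabs via \Cref{lem:computeSlabs}, so that for every $\tau=(x,y)$ the overlap equals $\sum_{R}\mathbbm{1}[\tau\in R](A_R+B_Rx+C_Ry+D_Rxy)$. Next, setting $\mathcal{Q}=X\times Y$, I would invoke \Cref{lem:query} to compute, for every $q=(q_x,q_y)\in\mathcal{Q}$, the four accumulated coefficients $A_q,B_q,C_q,D_q$, i.e., the sums of $A_R,B_R,C_R,D_R$ over all slabs $R$ containing $q$. By \Cref{lem:query}, this costs total time $\O(nm\log(nm)+|\mathcal{Q}|\log(nm))=\O((nm)^2\log(nm))$, since $|\mathcal{Q}|=\O((nm)^2)$.

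Finally, for each $q\in\mathcal{Q}$ the overlap value is exactly $A_q+B_qq_x+C_qq_y+D_qq_xq_y$, which can be evaluated in $\O(1)$ time once the coefficients are known, so scanning all $\O((nm)^2)$ candidates for the maximum adds only $\O((nm)^2)$ time. Summing the three phases gives the claimed $\O((nm)^2\log(nm))$ bound. I do not anticipate a genuine obstacle here, as this corollary is essentially a bookkeeping combination of the preceding lemmas; the only point that warrants care is verifying that the candidate count $|X\times Y|=\O((nm)^2)$ dominates the slab count $\O(nm)$, so that the query term in \Cref{lem:query} is what governs the overall running time.
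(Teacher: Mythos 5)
Your proposal is correct and follows exactly the paper's own argument: restrict to the candidate set $X\times Y$ via \Cref{lem:discretization}, batch-compute the coefficients $A_q,B_q,C_q,D_q$ for $\mathcal{Q}=X\times Y$ via \Cref{lem:query}, and evaluate each candidate in $\O(1)$ time. The only difference is that you spell out the cardinality bookkeeping ($|X|,|Y|=\O(nm)$, so $|\mathcal{Q}|=\O((nm)^2)$ dominates the slab count) slightly more explicitly than the paper does.
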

\begin{proof}
    Let $P_x,P_y,Q_x$ and $Q_y$ be the set of $x$- and $y$-coordinates of $P$ and $Q$ respectively. Let $X=P_x-Q_x$, $Y=P_y-Q_y$.
    By \Cref{lem:discretization}, the maximum is attained by a translation $\tau\in X\times Y$.
    The claim is then a consequence of \Cref{lem:query} by setting $\mathcal{Q}\gets X\times Y$.
    This gives us for every $q=(x,y)\in X\times Y$ values $A_q,B_q,C_q$ and $D_q$ such that \(\area(P\cap(Q+q))=A_q+B_qx+C_qy+D_qxy\) can be evaluated in $\O(1)$ time, concluding the proof.
\end{proof}

\section{\boldmath An $\O((nm)^{3/2} \log (nm))$ Algorithm}\label{sec:fast}

\begin{algorithm}[!b]
\caption{Maximum Polygon Overlap}
\label{alg}
\begin{algorithmic}[1]
\Procedure{MaxRectPolygonOverlap}{$P$, $Q$}
    \State Let $P_x,P_y,Q_x$ and $Q_y$ be the set of $x$- and $y$-coordinates of $P$ and $Q$ respectively\label{line:test}
    \State Let $X\gets P_x-Q_x$, $Y\gets P_y-Q_y$, and store them in sorted lists
    \State Compute the set of translation slabs $\mathcal{R}(P,Q)$ together with its weights
    \State Associate every translation slab $[l,r)\times[b,\infty)\in\mathcal{R}(P,Q)$ to $b\in Y$
    \State Partition $Y$ into $C=\O(\sqrt{|P||Q|})$ contiguous sets $Y_1,\ldots,Y_C$ such that each $Y_\beta$ is of minimal size, with at least $\sqrt{|P||Q|}$ translation slabs associated to its elements\label{line:grouping}
    \State For every $Y_\beta$ let $y_\beta^+$ be its last element
    \State Let $Y_\text{heavy}=\{y_\beta^+\mid \beta\leq C\}$, and $Y_\beta\gets Y_\beta\setminus\{y_\beta^+\}$ for every $\beta\leq C$\label{line:cuttingLast}
    \State Let $\tau$ be the element in $X\times Y_\text{heavy}$ where $\area(P\cap(Q+\tau))$ is maximum, via \Cref{lem:query}\label{line:computeHeavy1}
    \State $A\gets \area(P\cap(Q+\tau))$\label{line:computeHeavy2}
    \For{$\beta=1,\ldots,C$}\label{line:betaloop}\Comment{\emph{outer loop}}
        \State Let $y_1\in Y_\beta$ be the first element in $Y_\beta$ and $\mathcal{Q}= X\times\{y_1\}$\label{line:prereqStart}
        \State Compute $A_q$, $B_q$, $C_q$ and $D_q$ for every $q\in \mathcal{Q}$, via \Cref{lem:query}\label{line:Aq}
        \State Let $\mathcal{R}_\beta$ be the translation slabs associated to elements in $Y_\beta$
        \State Compute the arrangement $\mathcal{A}_\beta$ of $\{[l,r)\mid [l,r)\times[b,\infty)\in\mathcal{R}_\beta\}$\label{line:prereqEnd}
        \State Partition $X$ into contiguous sets $X_1,\ldots,X_{C'}$ according to $\mathcal{A}_\beta$\label{line:xsplit} 
        \For{$\alpha=1,\ldots,C'$}\label{line:alphaloop}\Comment{\emph{inner loop}}
            \State Let $\mathcal{R}_{\alpha,\beta}$ be the set of translation slabs $[l,r)\times[b,\infty)$ in $\mathcal{R}_\beta$ with $X_\alpha\subset [l,r)$
            \For{$y_j\in Y_\beta$}\Comment{\emph{transform $Y_\beta$ to $V\subset\bR^3$}}
                \State Let $\mathcal{R}_{\alpha,\beta,j}$ be the slabs $[l,r)\times[b,\infty)$ in $\mathcal{R}_{\alpha,\beta}$ with $y_1<b\leq y_j$\label{line:Rabj}
                \State Let $v_j=((y_j-y_1),\sum_{R\in \mathcal{R}_{\alpha,\beta,j}}(B_R + D_Ry_j),\sum_{R\in \mathcal{R}_{\alpha,\beta,j}}(A_R + C_Ry_j)^T)$
            \EndFor
            \State Let $V=\{v_j\mid y_j\in Y_\beta\}$ 
            \For{$x_i\in X_\alpha$}\label{line:innerloop}\Comment{\emph{Query $V$ instead of $Y_\beta$}}
                \State Compute $M=\max_{v_j\in V}\langle v_j,(C_q+D_qx_i,x_i,1)^T\rangle$ realized by $v_{j^*}\in V$
                \State $A' = A_q + B_qx_i + C_qy_1 + D_qx_iy_1 + M$\label{line:aprime2}
                \If{$ A' > A$} $A\gets A'$, $\tau\gets (x_i,y_{j^*})$\EndIf
            \EndFor
        \EndFor
    \EndFor
    \State \textbf{Return} $\tau$, $A$
\EndProcedure
\end{algorithmic}
\end{algorithm}


In this section we give critical insights that allow batching queries, improving on the algorithm given in \Cref{cor:trivial}. The improved algorithm splits the set $X\times Y$ from \Cref{lem:discretization} into $\O(\sqrt{nm})$ horizontal strips $X\times Y_\beta$ each $Y_\beta$ containing $\O(\sqrt{nm})$ elements of $Y$. In each such strip there are $\O(\sqrt{nm})$ lower boundaries of translation slabs $[l_i,r_i)\times[b_i,\infty)\in \mathcal{R}(P,Q)$. 
The corresponding intervals $[l_i,r_i)$ split $X\times Y_\beta$ into vertical strips $X_\alpha\times Y_\beta$, each containing $\O(\sqrt{nm})$ elements of $X$ on average.
For each such $X_\alpha\times Y_\beta$, we encode $Y_\beta$ in a three dimensional convex hull.
For any $x\in X_\alpha$, we need to find the $y\in Y_\beta$ realizing $\max_{y\in Y_\beta}\area(P\cap(Q+(x,y))$. This then amounts to computing an extreme point of the convex hull `in the direction of $x$'. Importantly, in three dimensions one can compute and store the convex hull in near-linear time allowing extreme-point queries in logarithmic time.
\begin{lemma}[\cite{dobkin1985linear,dobkin1990separation}]\label{lem:convexhull}
    Given $n$ points $p_1,\ldots,p_n$ in $\bR^3$ one can compute in $\O(n\log n)$ time a data structure that can answer extreme point queries for points $q\in \bR^3$ of the form
    $\mathrm{argmax}_i\langle p_i,q\rangle$
    in time $\O(\log n)$.
\end{lemma}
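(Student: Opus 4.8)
The plan is to reduce each query to a point-location query in the normal fan of the convex hull of the input points. First I would observe that, since $q\mapsto\langle p_i,q\rangle$ is linear, the maximizer $\mathrm{argmax}_i\langle p_i,q\rangle$ is always attained at a vertex of the convex hull $\mathcal{H}=\mathrm{conv}(p_1,\dots,p_n)$, and that the answer depends only on the \emph{direction} of $q$; hence I may assume $q$ lies on the unit sphere $S^2$ (the case $q=0$ is trivial, as all indices tie). Computing $\mathcal{H}$ takes $\O(n\log n)$ time via any standard three-dimensional convex hull algorithm, and $\mathcal{H}$ has $\O(n)$ vertices, edges, and facets by Euler's formula.

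Next I would build the query structure from the normal fan of $\mathcal{H}$. For each vertex $v$ of $\mathcal{H}$, the set of directions in which $v$ is the strict maximizer is the interior of its normal cone; intersecting these cones with $S^2$ subdivides the sphere into $\O(n)$ spherically convex cells, one per vertex, whose boundaries are great-circle arcs and whose total combinatorial complexity is $\O(n)$, since this subdivision is dual to the face lattice of $\mathcal{H}$ and is read off from $\mathcal{H}$ in $\O(n)$ time. To make it amenable to planar point location I would apply the central (gnomonic) projection onto the six faces of an enclosing cube; because gnomonic projection sends great circles to straight lines, this yields six planar straight-line subdivisions of total size $\O(n)$, on each of which a planar point-location structure (for instance Kirkpatrick's hierarchy) is built in $\O(n\log n)$ time. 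A query $q$ is then answered by determining in $\O(1)$ time which cube face the ray $\bR_{>0}\cdot q$ meets, projecting, locating the containing cell in $\O(\log n)$ time, and returning the vertex labelling that cell.

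The alternative route, and the one underlying the cited works, is the Dobkin--Kirkpatrick hierarchy: one builds a nested sequence $\mathcal{H}=P_1\supseteq P_2\supseteq\cdots\supseteq P_k$ in which each $P_{i+1}$ arises from $P_i$ by deleting an independent set of bounded-degree vertices comprising a constant fraction of the total, so that $k=\O(\log n)$ and the hierarchy is assembled in $\O(n)$ additional time. A direction query is resolved coarse-to-fine: the maximizer over the constant-size $P_k$ is found directly, and from the maximizer over $P_{i+1}$ the maximizer over $P_i$ is obtained by a single local hill-climbing step among the boundedly many reintroduced neighbours, at $\O(1)$ cost per level and $\O(\log n)$ overall.

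The crux in either route is achieving $\O(\log n)$ query time rather than the naive $\O(n)$ per query. For the normal-fan approach this rests on the linear combinatorial complexity of the spherical subdivision together with the logarithmic query time of planar point location after near-linear preprocessing; for the hierarchy it rests on the independent-set deletion lemma, which simultaneously guarantees logarithmic depth and $\O(1)$ updates per level. Degenerate queries in which $q$ lands on a cell boundary, so that several vertices tie, are harmless: we only require some maximizer, so any consistent tie-breaking in the point-location or hill-climbing step suffices.
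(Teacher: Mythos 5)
The paper does not prove this lemma at all---it is imported verbatim from Dobkin and Kirkpatrick, so there is no in-paper argument to match yours against. Your sketch is a correct account of how the result is established, and in fact gives two valid routes. The normal-fan route (maximizer lies at a hull vertex; the subdivision of $S^2$ into normal cones has linear complexity because it is dual to the face lattice; gnomonic projection onto cube faces turns great-circle arcs into segments so that standard $\O(n\log n)$-preprocessing, $\O(\log n)$-query planar point location applies) is fully rigorous as stated and is arguably the more self-contained argument. The hierarchy route is the one actually taken in the cited works; the only point you gloss over is the claim that the new maximizer on $P_i$ is found among ``boundedly many reintroduced neighbours'' of the maximizer on $P_{i+1}$---the correct statement is that at most one deleted vertex can lie strictly above the supporting plane of $P_{i+1}$ in direction $q$ (because the vertices of $P_i$ in an open halfspace induce a connected subgraph of the $1$-skeleton, and the deleted vertices form an independent set), and locating that candidate in $\O(1)$ time per level is precisely the technical content of Dobkin--Kirkpatrick. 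Since the lemma is cited rather than proved in the paper, this omission is harmless, and either of your two routes suffices.
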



\begin{lemma}\label{lem:correctness}
    Let $P$ and $Q$ be orthogonal polygons with $n$ and $m$ vertices respectively. \Cref{alg} provided with $P$ and $Q$ computes some $\tau^*\in \bR^2$ and $A^*\in \bR$ such that $\tau^*$ attains the maximum overlap $\max_{\tau\in\bR^2}\area(P\cap(Q+\tau)) = A^* = \area(P\cap(Q+\tau^*))$.
\end{lemma}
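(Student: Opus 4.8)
The plan is to prove correctness of \Cref{alg} by establishing two things: that the algorithm considers (implicitly or explicitly) a superset of the candidate translations guaranteed to contain an optimum by \Cref{lem:discretization}, and that for each candidate it considers, the value it computes equals the true overlap area. Since \Cref{lem:discretization} tells us the maximum over $\bR^2$ is attained at some $\tau \in X \times Y$, it suffices to show that every $(x,y) \in X \times Y$ is evaluated correctly and that the global maximum of these evaluations is what the algorithm returns. First I would observe that the sets $X \times Y_{\mathrm{heavy}}$ (handled in lines~\ref{line:computeHeavy1}--\ref{line:computeHeavy2}) together with the sets $X_\alpha \times Y_\beta$ (handled in the nested loops) form a partition of $X \times Y$: the partition of $Y$ into $Y_1,\ldots,Y_C$ is contiguous, the removal of the last elements $y_\beta^+$ into $Y_{\mathrm{heavy}}$ accounts for those rows separately, and for each $\beta$ the partition of $X$ into $X_1,\ldots,X_{C'}$ according to $\mathcal{A}_\beta$ is likewise contiguous and exhaustive. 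Hence every element of $X \times Y$ is examined exactly once.

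The crux is the inner-loop correctness, i.e., showing that for a fixed cell $X_\alpha \times Y_\beta$ and a fixed $x_i \in X_\alpha$, $y_j \in Y_\beta$, the quantity
\[
A_q + B_q x_i + C_q y_1 + D_q x_i y_1 + \langle v_j, (C_q + D_q x_i,\, x_i,\, 1)^T\rangle
\]
equals $\area(P \cap (Q + (x_i, y_j)))$. The key algebraic manipulation to carry out is the following. By \Cref{lem:query}, the values $A_q, B_q, C_q, D_q$ computed at the first row $y_1$ of $Y_\beta$ (line~\ref{line:Aq}) aggregate exactly the weights of all translation slabs in $\mathcal{R}(P,Q)$ containing $(x_i, y_1)$. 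Moving up from $y_1$ to $y_j$ within the strip, the \emph{only} additional slabs that become active are those in $\mathcal{R}_{\alpha,\beta}$ whose lower boundary $b$ satisfies $y_1 < b \le y_j$ — and this is precisely the set $\mathcal{R}_{\alpha,\beta,j}$ from line~\ref{line:Rabj}. Here the design choice of grouping into strips pays off: because $x_i \in X_\alpha \subset [l,r)$ for every slab in $\mathcal{R}_{\alpha,\beta}$, the $x$-membership is automatic and only the $y$-threshold matters. Therefore
\[
\area(P\cap(Q+(x_i,y_j))) = \sum_{R : (x_i,y_1)\in R}(A_R + B_R x_i + C_R y_j + D_R x_i y_j) + \sum_{R\in\mathcal{R}_{\alpha,\beta,j}}(A_R + B_R x_i + C_R y_j + D_R x_i y_j).
\]

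The main obstacle, and the step I expect to demand the most care, is verifying that the first of these two sums is correctly captured by the affine part $A_q + B_q x_i + C_q y_1 + D_q x_i y_1$ \emph{together with} the contribution that $v_j$ makes through its dependence on $y_j$, rather than $y_1$. The subtlety is that $A_q,\ldots,D_q$ were computed at height $y_1$, so a slab containing $(x_i,y_1)$ contributes $C_R y_1 + D_R x_i y_1$ there, but its true contribution at height $y_j$ is $C_R y_j + D_R x_i y_j$; this $y$-discrepancy must be reconciled. I would resolve this by expanding the inner product $\langle v_j, (C_q + D_q x_i, x_i, 1)^T\rangle$ using the definition of $v_j$ from the algorithm, splitting its coordinates into the term linear in $(y_j - y_1)$ and the terms $\sum_{R\in\mathcal{R}_{\alpha,\beta,j}}(B_R + D_R y_j)$ and $\sum_{R\in\mathcal{R}_{\alpha,\beta,j}}(A_R + C_R y_j)$, and checking term by term that the total telescopes to exactly the two sums above — in particular that the $(y_j - y_1)$ coordinate supplies precisely the missing $y$-increment for the slabs already active at $y_1$. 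Once this identity is confirmed, correctness of the returned pair $(\tau^*, A^*)$ follows since the algorithm records the running maximum over all evaluated cells and \Cref{lem:discretization} guarantees the optimum lies among them.
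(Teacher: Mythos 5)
Your proposal is correct and follows essentially the same route as the paper's proof: reduce to $X\times Y$ via \Cref{lem:discretization}, argue that $X\times Y_\text{heavy}$ and the cells $X_\alpha\times Y_\beta$ partition the candidate set, decompose the slabs active at $(x_i,y_j)$ into those active at $(x_i,y_1)$ plus $\mathcal{R}_{\alpha,\beta,j}$, and verify that the inner product with $v_j$ supplies both the new slabs' contribution and the $(y_j-y_1)$ increment for the already-active ones. The one step you flag as needing care is exactly the algebraic identity the paper checks, and your expansion of it is right.
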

\begin{proof}
     By \Cref{lem:discretization} it suffices for the algorithm to compute $\max_{\tau'\in X\times Y}\area(P\cap(Q+\tau'))$.
     
    First observe that $Y$ in line \ref{line:grouping} can indeed be partitioned as claimed via a linear scan of $Y$ as there are a total of $\O(nm)$ translation slabs in $\mathcal{R}(P,Q)$. 
    Hence after line \ref{line:cuttingLast} $Y_\text{heavy}$ and all $Y_\beta$ together partition $Y$, where each $Y_\beta$ is a contiguous subset of $Y$. \Cref{lem:query} implies that after line \ref{line:computeHeavy1} and line \ref{line:computeHeavy2} $A$ stores $\area(P\cap(Q+\tau))=\max_{\tau'\in X\times Y_\text{heavy}}\area(P\cap(Q+\tau'))$.
    
    
    As the sets $Y_\beta$ partition $Y\setminus Y_\text{heavy}$, and for any fixed $Y_\beta$ the sets $X_\alpha$ partition $X$, it suffices to show that for any $\alpha$, $\beta$, and $x_i\in X_\alpha$ the value $A'$ computed in line \ref{line:aprime2} corresponds to $\max_{y_j\in Y_\beta}\area(P\cap (Q+(x_i,y_j)))$. 
    For this, consider some iteration of the outer \textbf{for} loop in line \ref{line:betaloop} and fix some $\alpha\leq C'$. Next, observe that $\mathcal{R}_{\alpha,\beta}$ is exactly the set of translation slabs $[l,r)\times[b,\infty)$ in $\mathcal{R}(P,Q)$ which has its lower boundary $b$ coincide with some $y_j\in Y_\beta$ and $X_\alpha\subset [l,r)$. In particular, this implies that for the first element $y_1$ in $Y_\beta$ and any $x_i\in X_\alpha$ and $y_j\in Y_\beta$ the set $\{R\in \mathcal{R}(P,Q)\mid (x_i,y_j)\in R\}$ decomposes into $\{R\in \mathcal{R}(P,Q)\mid (x_i,y_1)\in R\}$ and $\mathcal{R}_{\alpha,\beta,j}=\{[l,r)\times[b,\infty)\in\mathcal{R}_{\alpha,\beta}\mid y_1 < b\leq y_j\}$. And thus for $q=(x_i,y_1)\in\mathcal{Q}$ we have  
    {\allowdisplaybreaks
    \begin{align*}
        &\area(P\cap(Q+(x_i,y_j))) =\sum_{R\in\mathcal{R}(P,Q)}\mathbbm{1}[(x_i,y_j)\in R]\left(A_{R}+B_{R}x_i+C_{R}y_j+D_{R}x_iy_j\right)\\
        =&\sum_{R\in\mathcal{R}(P,Q)}\mathbbm{1}[q\in R]\left(A_{R}+B_{R}x_i+C_{R}y_j+D_{R}x_iy_j\right)+\sum_{R\in\mathcal{R}_{\alpha,\beta,j}}\left(A_{R}+B_{R}x_i+C_{R}y_j+D_{R}x_iy_j\right)\\
        =&A_q+B_qx_i+C_qy_j+D_qx_iy_j+\sum_{R\in\mathcal{R}_{\alpha,\beta,j}}\left(A_{R}+B_{R}x_i+C_{R}y_j+D_{R}x_iy_j\right)
    \end{align*}
    }
    
    
    for the values $A_q$, $B_q$, $C_q$ and $D_q$ computed for $q$ in line \ref{line:Aq} via \Cref{lem:query}. Finally as
    {
    \allowdisplaybreaks
    \begin{align*}
        &\max_{y_j\in Y_\beta} \left(\area(P\cap(Q+(x_i,y_j)))\right)\\
        =&\max_{y_j\in Y_\beta} \Bigg(A_q + B_qx_i + C_qy_j + D_qx_iy_j + \sum_{R\in \mathcal{R}_{\alpha,\beta,j}}\left(A_R+B_Rx_i+C_Ry_j + D_Rx_iy_j\right)\Bigg)\\
        =&\max_{y_j\in Y_\beta}\Bigg(A_q + B_qx + C_qy_1 + C_q(y_j-y_1) + D_qxy_1 + D_qx(y_j-y_1)\\
        &+ \sum_{R\in \mathcal{R}_{\alpha,\beta,j}}\left(A_R+B_Rx_i+C_Ry_j + D_Rx_iy_j\right)\Bigg)\\
        =&A_q + B_qx_i + C_qy_1 + D_qx_iy_1 \\
        &+ \max_{y_j\in Y_\beta}\Bigg(\!(C_q+D_qx_i)(y_j-y_1) + \bigg(\sum_{R\in \mathcal{R}_{\alpha,\beta,j}}\left(B_R + D_Ry_j\right)\!\!\bigg)x_i + \sum_{R\in \mathcal{R}_{\alpha,\beta,j}}\left(A_R+C_Ry_j\right)\!\!\Bigg)\\
        =&A_q + B_qx_i + C_qy_1 + D_qx_iy_1 + \max_{y_j\in Y_{\beta}}\left\langle\begin{pmatrix}y_j-y_1\\
        \sum_{R\in \mathcal{R}_\alpha,i\leq j}(B_R + D_Ry_j)\\
        \sum_{R\in \mathcal{R}_\alpha,i\leq j}(A_R + C_Ry_j)  
        \end{pmatrix},\begin{pmatrix}
            C_q + D_qx_i\\
            x_i\\
            1
        \end{pmatrix}\right\rangle,
    \end{align*}
    }
    the computed value $A'$ in line \ref{line:Aq} is exactly $\max_{y_j\in Y_\beta} \left(\area(P\cap(Q+(x_i,y_j)))\right)$.
\end{proof}

We remark that the sole purpose of $Y_\text{heavy}$ is to guarantee that each $Y_\beta$ has at most $\sqrt{nm}$ translation slabs associated to it, and in turn the arrangement $\mathcal{A}_\beta$ consists of at most $\O(\sqrt{nm})$ contiguous subsets of $X$. This is important, as we transform $Y_\beta$ into a different set $V$ for each contiguous subset in $\mathcal{A}_\beta$. We note that by construction any $y\in Y$ that has at least $\sqrt{nm}$ translation slabs associated to it will end up in $Y_\text{heavy}$. As there are at most $\O(nm)$ translation slabs, $Y_\text{heavy}$ cannot be too big, and hence $\area(P\cap(Q+\tau))=\max_{\tau'\in X\times Y_\text{heavy}}\area(P\cap(Q+\tau'))$ can be computed efficiently via \Cref{lem:query}. In fact, as every $y\in Y$ has at least one translation slab associated to it (\Cref{lem:translationfunction} and \Cref{cor:translationslabs}), $Y_\text{heavy}$ can be seen as just being a safe guard against inputs that are not in `general position', where $|Y|\ll nm$. The exponent of $3/2$ is the result of balancing the sizes of $Y_\beta$ with the sizes of $X_\alpha$.

\begin{lemma}\label{lem:runningtime}
    Let $P$ and $Q$ be orthogonal polygons with $n$ and $m$ vertices respectively. The algorithm described in \Cref{alg} provided with $P$ and $Q$ has running time $\O((nm)^{3/2}\log(nm))$.
\end{lemma}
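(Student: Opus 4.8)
The plan is to charge the work of each line of \Cref{alg} against the budget $\O((nm)^{3/2}\log(nm))$, after first recording a few structural size bounds. Since $|P_x|,|P_y| = \O(n)$ and $|Q_x|,|Q_y| = \O(m)$, the difference sets satisfy $|X|,|Y| = \O(nm)$, and by \Cref{lem:computeSlabs} there are $|\mathcal{R}(P,Q)| = \O(nm)$ translation slabs. Because each $Y_\beta$ (before line \ref{line:cuttingLast} removes its last element) is a \emph{minimal} contiguous block carrying at least $\sqrt{nm}$ associated slabs and there are only $\O(nm)$ slabs in total, the number of blocks is $C = \O(\sqrt{nm})$; moreover, after line \ref{line:cuttingLast} the reduced block $Y_\beta$ carries fewer than $\sqrt{nm}$ slabs, so $|\mathcal{R}_\beta| = \O(\sqrt{nm})$, the arrangement $\mathcal{A}_\beta$ has $\O(\sqrt{nm})$ cells, and hence $C' = \O(\sqrt{nm})$. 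Finally I record the telescoping identities $\sum_\beta |Y_\beta| \le |Y| = \O(nm)$ and, for each fixed $\beta$, $\sum_\alpha |X_\alpha| = |X| = \O(nm)$; these are what make the amortized accounting work.

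For the preprocessing (lines \ref{line:test}--\ref{line:computeHeavy2}), sorting $X$ and $Y$ costs $\O(nm\log(nm))$, computing the slabs costs $\O(nm + (n+m)\log(n+m))$ by \Cref{lem:computeSlabs}, and the grouping in line \ref{line:grouping} is a single linear scan. The only expensive preprocessing step is line \ref{line:computeHeavy1}: here $\mathcal{Q} = X\times Y_\text{heavy}$ has $|X|\cdot|Y_\text{heavy}| = \O(nm)\cdot\O(\sqrt{nm}) = \O((nm)^{3/2})$ elements, so \Cref{lem:query} answers all queries in $\O(nm\log(nm) + (nm)^{3/2}\log(nm)) = \O((nm)^{3/2}\log(nm))$ time.

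Next I would bound the body of one outer iteration (fixed $\beta$) by $\O(nm\log(nm))$, after which multiplying by $C = \O(\sqrt{nm})$ yields the claim. Line \ref{line:Aq} invokes \Cref{lem:query} with $\mathcal{Q} = X\times\{y_1\}$ of size $\O(nm)$, costing $\O(nm\log(nm))$; building $\mathcal{A}_\beta$ and partitioning $X$ (lines \ref{line:prereqEnd}--\ref{line:xsplit}) costs $\O(nm\log(nm))$ since $|\mathcal{R}_\beta| = \O(\sqrt{nm})$ and $|X| = \O(nm)$. In the inner loop, building each convex hull over $V$ (of size $|Y_\beta|$) costs $\O(|Y_\beta|\log(nm))$ by \Cref{lem:convexhull}, which summed over the $\O(\sqrt{nm})$ values of $\alpha$ is $\O(\sqrt{nm}\,|Y_\beta|\log(nm))$; the extreme-point queries in lines \ref{line:innerloop}--\ref{line:aprime2} cost $\O(\log(nm))$ each, and summed over $\alpha$ total $\sum_\alpha|X_\alpha|\cdot\O(\log(nm)) = \O(nm\log(nm))$.

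The step that needs the most care is the construction of the vectors $v_j$ (the assignment immediately following line \ref{line:Rabj}). A naive recomputation of each sum $\sum_{R\in\mathcal{R}_{\alpha,\beta,j}}(\cdot)$ would be too slow, so the key observation is that $\sum_{R\in\mathcal{R}_{\alpha,\beta,j}}(B_R+D_Ry_j) = \big(\sum_R B_R\big) + \big(\sum_R D_R\big)y_j$, where the inner sums range over the slabs of $\mathcal{R}_{\alpha,\beta}$ whose lower boundary lies in $(y_1,y_j]$; processing the $y_j\in Y_\beta$ in increasing order (with $\mathcal{R}_\beta$ pre-sorted by lower boundary) and adding each slab's contribution to four running totals exactly when $y_j$ reaches its lower boundary lets each $v_j$ be formed in $\O(1)$ amortized time. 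The per-$\alpha$ cost of this sweep is then $\O(|Y_\beta| + |\mathcal{R}_{\alpha,\beta}|)$; summing over $\alpha$ for fixed $\beta$ gives $\O(\sqrt{nm}\,|Y_\beta| + \sum_\alpha|\mathcal{R}_{\alpha,\beta}|)$, and since each of the $\O(\sqrt{nm})$ slabs in $\mathcal{R}_\beta$ lands in at most $C' = \O(\sqrt{nm})$ of the sets $\mathcal{R}_{\alpha,\beta}$, the last sum is $\O(nm)$. Combining all per-$\beta$ contributions and using $\sum_\beta|Y_\beta| = \O(nm)$ bounds one outer iteration by $\O(nm\log(nm))$, and the $C = \O(\sqrt{nm})$ iterations give the overall $\O((nm)^{3/2}\log(nm))$ bound. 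The main obstacle is exactly this amortized charging for $v_j$: one must verify both that the prefix sums can be maintained incrementally and that the total multiplicity $\sum_\alpha|\mathcal{R}_{\alpha,\beta}|$ stays $\O(nm)$, which is precisely where the $\sqrt{nm}$ cap on $|\mathcal{R}_\beta|$ enforced by $Y_\text{heavy}$ is used.
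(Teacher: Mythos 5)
Your proof is correct and follows essentially the same route as the paper: the same size bounds ($C, C', |\mathcal{R}_\beta| = \O(\sqrt{nm})$), the same use of \Cref{lem:query} for $X\times Y_\text{heavy}$ and $X\times\{y_1\}$, the same incremental prefix-sum construction of the vectors $v_j$, and the same convex-hull charging. The only cosmetic difference is that you amortize the $|Y_\beta|$-dependent costs via $\sum_\beta |Y_\beta| \le |Y| = \O(nm)$, whereas the paper bounds $|Y_\beta| = \O(\sqrt{nm})$ per block (using that every $y\in Y$ has at least one associated slab); both yield the stated bound.
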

\begin{proof}
    First observe that $Y$ in line \ref{line:grouping} can be partitioned via a linear scan in $\O(nm)$ time. 
    Furthermore, observe that for any $\beta\leq C$ the set $Y_\beta\setminus\{y_\beta^+\}$ has less than $\sqrt{nm}$ translation slabs associated to it, as otherwise $Y_\beta$ would not have been minimal.
    Hence after line \ref{line:cuttingLast} the set $Y_\text{heavy}$ and all sets $Y_\beta$ together partition $Y$, $|Y_\text{heavy}|\leq\O(\sqrt{nm})$, and each $Y_\beta$ has at most $\sqrt{nm}$ translation slabs associated to it. Note that $|Y_\beta|\leq\O(\sqrt{nm})$ as any $y\in Y$ has at least one translation slab associated to it (\Cref{lem:translationfunction} and \Cref{cor:translationslabs}).
    By \Cref{lem:query}, line \ref{line:computeHeavy1} and line \ref{line:computeHeavy2} take a total of $\O((nm)^{3/2}\log(nm))$ time.

    Now consider one iteration of the outer \textbf{for} loop in line \ref{line:betaloop}. Lines \ref{line:prereqStart}--\ref{line:prereqEnd} take a total of $\O(\sqrt{nm}\log(nm))$ time, by \Cref{lem:query} together with the fact that $|\mathcal{R}_\beta|\leq\sqrt{nm}$. With $\mathcal{A}_\beta$ at hand, computing the partition of $X$ in line \ref{line:xsplit} takes $\O(nm)$ time.
    
    Now consider one iteration of the inner \textbf{for} loop in line \ref{line:alphaloop}. $\mathcal{R}_{\alpha,\beta}$ can be computed in $\O(|\mathcal{R}_\beta|)=\O(\sqrt{nm})$ time. Similarly, for every $y_j\in Y_\beta$ the values $\sum_{R\in \mathcal{R}_{\alpha,\beta,j}}A_R$, $\sum_{R\in \mathcal{R}_{\alpha,\beta,j}}B_R$, $\sum_{R\in \mathcal{R}_{\alpha,\beta,j}}C_R$, and $\sum_{R\in \mathcal{R}_{\alpha,\beta,j}}D_R$ and with it the set $V$ can be computed in $\O(\sqrt{nm})$ time via a linear scan of $Y_\beta$, as $\mathcal{R}_{\alpha,\beta,j}$ and $\mathcal{R}_{\alpha,\beta,j+1}$ differ by exactly all translation slabs in $\mathcal{R}_{\alpha,\beta}$ associated to $y_{j+1}$. 
    By \Cref{lem:convexhull}, endowing the set $V$ with a data structure allowing three-dimensional extreme-point queries can be done in time $\O(\sqrt{nm}\log(nm))$. Finally, each iteration of the \textbf{for} loop in line \ref{line:innerloop} takes $\O(\log(nm))$ time, computing $\max_{v_j\in V}\langle v_j,(C_q+D_qx_i,x_i,1)^T\rangle$ via the computed data structure in $\O(\log(|V|))$ time by \Cref{lem:convexhull}. Thus each iteration of the inner \textbf{for} loop in line \ref{line:alphaloop} takes $\O(\sqrt{nm}\log (nm) + |X_\alpha|\log(nm))$ time. This implies that the iteration of the outer \textbf{for} loop in line \ref{line:betaloop} takes total time in
    \[\O\left(nm + \sqrt{nm}\log(nm) + \sum_{\alpha\leq C'}\left(\sqrt{nm}\log(nm) + |X_\alpha|\log(nm)\right)\right)=\O(nm\log(nm)).\]
    Finally the total running time of the algorithm is bound by
    \[\O\left((nm)^{3/2}\log(nm) + \sum_{\beta\leq C}\left(nm\log(nm)\right)\right)=\O\left((nm)^{3/2}\log(nm)\right).\]
    
\end{proof}

\fastAlg*
\begin{proof}
    This is an immediate consequence of \Cref{lem:correctness} and \Cref{lem:runningtime}.
\end{proof}

\section{Lower Bounds}
We now complement our algorithmic results by giving a lower bound.
We reduce to the Polygon Overlap problem from a slightly more general form of $5$-SUM, which we call the imbalanced $5$-SUM, or $(3,2)$-SUM:

\begin{problem}[$(3,2)$-SUM]
    Given three sets of positive integers $A$, $B$, and $C$, of size $n$ each, and two sets $D$, and $E$ of size $m\leq n$ each, is there $(a,b,c,d,e)\in A\times B\times C\times D \times E$ such that $a=b+c+d+e$?
\end{problem}

\begin{lemma}\label{lem:32sum}
    Assuming the $k$-SUM Hypothesis, then $(3,2)$-SUM with sets of size $n$ and $m\leq n$ cannot be solved in $\O((\max(n^2,nm^2))^{1-\eps})$ time for any $\eps>0$.
\end{lemma}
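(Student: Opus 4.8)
The plan is to prove the two terms of the maximum separately: the regime $m\le\sqrt n$ (where $\max(n^2,nm^2)=n^2$) by a reduction from $3$-SUM, and the regime $\sqrt n\le m\le n$ (where $\max(n^2,nm^2)=nm^2$) by a reduction from $5$-SUM. Both are instances of the $k$-SUM hypothesis: $k=3$ forbids running time $\O(n^{2-\eps})$ and $k=5$ forbids $\O(n^{3-\eps})$, since $\lceil 3/2\rceil=2$ and $\lceil 5/2\rceil=3$. As the two regimes overlap at $m=\sqrt n$ and together cover all $m\le n$, and in each regime the derived bound equals $\max(n^2,nm^2)^{1-\eps}$, this yields the lemma.

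For the $n^{2-\eps}$ bound I would take a $3$-SUM instance of size $n$ asking whether $a=b+c$, map its three sets directly to the big sets $A,B,C$, and make the small sets $D,E$ \emph{inert}: each gets one genuine element (after a global additive shift keeping all values positive, a designated value playing the role of $0$) together with $m-1$ dummy values chosen so large that any sum $b+c+d+e$ using a dummy exceeds $\max A$. Then the $(3,2)$-SUM instance is a yes-instance iff the $3$-SUM instance is, it has big sets of size $n$ and small sets of size $m$, and for $m\le\sqrt n$ a hypothetical $\O(\max(n^2,nm^2)^{1-\eps})=\O(n^{2-2\eps})$ algorithm would decide $3$-SUM in subquadratic time, contradicting the hypothesis.

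For the $(nm^2)^{1-\eps}$ bound I would reduce from a balanced $5$-SUM instance $A_1,\dots,A_5$ of size $n$ asking whether $a_1=a_2+a_3+a_4+a_5$. I map $A_1,A_2,A_3$ directly to the big sets $A,B,C$ (size $n$), and split each of $A_4$ and $A_5$ into $\lceil n/m\rceil$ contiguous chunks of size at most $m$ (padding the last chunk to size exactly $m$ with inert large dummies as above). For every pair of chunks $(D_i,E_j)$ I run the hypothetical $(3,2)$-SUM algorithm on $(A,B,C,D_i,E_j)$ of sizes $n,m$; the disjunction over all $\O((n/m)^2)$ chunk pairs is a yes-instance iff the $5$-SUM instance is. Using $\max(n^2,nm^2)=nm^2$ in this regime, the total running time is
\[
\O\!\left(\left(\tfrac{n}{m}\right)^{2}\cdot(nm^2)^{1-\eps}\right)=\O\!\left(n^{3-\eps}m^{-2\eps}\right)=\O(n^{3-\eps}),
\]
which beats the $5$-SUM threshold $n^{3-\eps'}$ (take $\eps'=\eps/2$) and contradicts the hypothesis.

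I expect the crux to be the second reduction. The factor $m^2$ appears on a single side of the natural meet-in-the-middle for $(3,2)$-SUM, so no direct size-preserving mapping from balanced $5$-SUM exists; the idea that saves the argument is that chunking the two ``small'' sources into size-$m$ pieces creates exactly $\O((n/m)^2)$ sub-instances, and this overhead is cancelled precisely by the gap between $nm^2$ and $n^3$, so the product $(n/m)^2\cdot(nm^2)^{1-\eps}$ collapses to $\O(n^{3-\eps})$. The remaining things to verify are routine: that padding chunks to size exactly $m$ neither destroys genuine solutions nor creates spurious ones, and that the two regimes line up at $m=\sqrt n$ so the proven bound is $\max(n^2,nm^2)^{1-\eps}$ for every $m\le n$.
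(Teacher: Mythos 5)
Your proposal is correct and follows essentially the same route as the paper: a reduction from $3$-SUM (with inert/padded small sets) for the $n^{2}$ term, and a split of a balanced $5$-SUM instance into $\Theta((n/m)^{2})$ sub-instances of $(3,2)$-SUM for the $nm^{2}$ term, with the $(n/m)^{2}$ overhead absorbed by the gap between $nm^{2}$ and $n^{3}$. Your version is somewhat more explicit than the paper's (which states the $3$-SUM step only for $m=1$ and phrases the $5$-SUM step via pigeonhole), but the underlying argument is the same.
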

\begin{proof}
    First, observe that any $3$-SUM instance can be transformed into a $(3,2)$-SUM instance with $m=1$, hence $(3,2)$-SUM cannot be solved in $\O(n^{2-\eps})$ time.

    Next, observe that any $5$-SUM instance can be split into $\Theta((n/m)^2)$ $(3,2)$-SUM instances where the original $5$-SUM instance has a solution if and only if one of the $(3,2)$-SUM instances has a solution. This is the case by splitting two sets of the $5$-SUM instance into $\lceil\frac{n}{m}\rceil$ sets of size $m$ each, forming one $(3,2)$-SUM instance for every combination of these smaller sets. Hence, by the pidgeon hole principle, at least one $(3,2)$-SUM instance cannot be solved in $\O(n^{3-\eps}/(n/m)^2)=\O((nm^2)^{1-\eps})$ time concluding the proof.
\end{proof}

\subsection{Construction}

Let $A$, $B$, $C$, $D$ and $E$ constitute a $(3,2)$-SUM instance. We assume the sets to be sorted. We now construct two polygons $P$ and $Q$ with $\O(n)$ and $\O(m)$ vertices respectively, together with a value $\alpha$ such that deciding if $\max_{\tau\in \bR^2}\area(P\cap(Q+\tau))\geq \alpha$ amounts to verifying that the $(3,2)$-SUM instance has a solution.

For the construction we fix two parameters. The first parameter $M$ we choose to be $100\sum_{S\in \{A,B,C,D,E\}}\sum_{s\in S} S$. The parameter $M$ will serve as a trivial upper bound to any solution. The second parameter $\eps$ we choose to be $1/(100n^2)$, serving as a sufficiently small grid size on which we construct the polygons\footnote{Note that our choices of both $M$ and $\eps$ are quite conservative. We believe the construction works with $M=\Theta(\max_{S\in \{A,B,C,D,E\}}\max_{s\in S} S)$ and $\eps=\Theta(1/n)$ as well.}.

Observe that we can connect any set of $n$ pairwise disjoint polygons to form a single polygon via $\O(n)$ polygons with measure $0$. As such, we may describe the polygons $P$ and $Q$ as a union of disjoint polygons instead\footnote{We remark that we use measure $0$ connections for ease of exposition. The same construction also works with connecting pieces with non-zero measure, as long as the total measure $\mu$ of all connecting pieces is sufficiently small, i.e., in particular $\mu\ll\eps^2$.}. Both $P$ and $Q$ consist of four gadgets each. We give the precise construction of the gadgets first, before giving intuition on the constructed instance. We conclude proving that computing the maximum overlap between these polygons amounts to solving $(3,2)$-SUM. For the construction of the gadgets (except for the anchor gadget) refer to \Cref{fig:reduction}.

\textbf{Anchor gadget:} Both $P$ and $Q$ have an anchor gadget, whose sole purpose is to restrict the set of translations which may attain the maximum. The anchor gadget $P_A$ of $P$ is a square with side length $M$ including the anchor gadget $Q_A$ of $Q$ being a square of side length $1$ sufficiently far away from the origin. More precisely, the anchor gadget of $P$ is defined by the vertices $(0,100M)$, $(0,101M+1)$, $(M,101M+1)$ and $(M,100M)$, and the anchor gadget of $Q$ is defined by the vertices $(0,100M)$, $(0,100M+1)$, $(1,100M+1)$ and $(1,100M)$. Observe in particular that the anchor gadgets lie on the grid with side length $\eps$.

\begin{figure}
    \centering
    \includegraphics[width=0.8\textwidth]{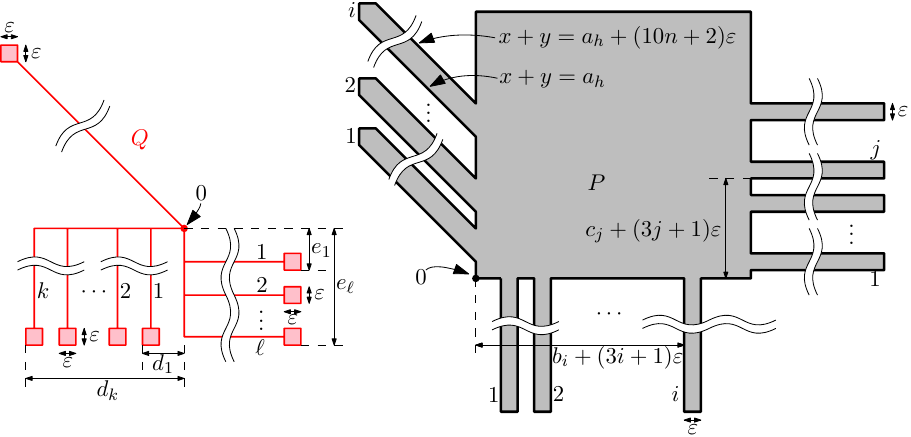}
    \caption{Construction of the polygons $P$ and $Q$ such that maximizing $\area(P\cap(Q+\tau))$ amounts to solving a $(3,2)$-SUM instance. The anchor gadgets of $P$ and $Q$ are not shown.}
    \label{fig:reduction}
\end{figure}

\textbf{Origin:} Unlike the anchor gadget, which is far away from the origin, all other pieces of non-zero measure of $P$ and $Q$ are contained in $[-2M,3M]\times[-2M,3M]$, as we will see later. The rectangle $[0,M]\times[0,M]$ is part of $P$ and serves as a root polygon to which its other gadgets are attached.

\textbf{$x$-comb gadget:} For the $\bindex$\ts{th} element $b_{\bindex}\in B$ add the rectangle $P_x^{\bindex}=[b_{\bindex} + (3{\bindex}+1)\eps,b_{\bindex}+(3{\bindex}+2)\eps]\times[-2M,0]$ to $P$, which we refer to as the ${\bindex}$\ts{th} $x$-prong of $P$. 
For the ${\dindex}$\ts{th} element $d_{\dindex}\in D$ we add the square $Q_x^{\dindex}[-d_{\dindex},-d_{\dindex}+\eps]\times[-2M,-2M+\eps]$ to $Q$, which we refer to as the ${\dindex}$\ts{th} $x$-prong of $Q$.

\textbf{$y$-comb gadget:} For the ${\cindex}$\ts{th} element $c_{\cindex}\in C$ add the rectangle $P_y^{\cindex} = [M,3M]\times [c_{\cindex} + (3{\cindex}+1)\eps,c_{\cindex}+(3{\cindex}+2)\eps]$ to $P$, which we refer to as the ${\cindex}$\ts{th} $y$-prong of $P$. 
For the ${\eindex}$\ts{th} element $e_{\eindex}\in E$ we add the square $Q_y^{\eindex}=[2M-\eps,2M]\times [-e_{\eindex},-e_{\eindex}+\eps]$ to $Q$, to which we refer as the ${\eindex}$\ts{th} $y$-prong of $Q$.

\textbf{sum gadget:} For the ${\aindex}$\ts{th} element $a_{\aindex}\in A$ let $P_d^{\aindex}$ be the polygon resulting from restricting the set $\{(x,y)\in\bR^2\mid a_{\aindex}\leq x+y\leq a_{\aindex}+(10n+2)\eps\}$ to $([-2M,M]\times[0,3M])\setminus([0,M]\times[0,M])$, and add it to $P$. We call the polygon corresponding to $a_{\aindex}$ the ${\aindex}$\ts{th} diagonal prong of $P$. Lastly add the single square $Q_v=[-2M,-2M+\eps]\times[2M,2M+\eps]$ to $Q$, called the verifier of $Q$.

\textbf{Intuition:}
The role of the anchor gadgets is to restrict the set of translations $\tau$ such that $\area(P\cap(Q+\tau))>1$ to $[0,M]\times[0,M]$. This is achieved since outside the anchor gadget of $Q$ (which has measure exactly $1$), $Q$ has a total measure of $(2n+1)\eps\ll 1$. Hence, any translation that attains an overlap of more than $1$ must place the anchor gadget of $Q$ within the anchor gadget of $P$ and move at least one prong of $Q$ inside $P$.

The role of the $x$-prongs is slightly more involved: By construction of the anchor gadgets, any $x$-prong of $Q$ may only be translated to intersect an $x$-prong of $P$. By the $(3{\bindex})\eps$-shifts applied to the $x$-prongs of $P$, at most one $x$-prong of $Q$ can ever intersect any $x$-prong of $P$. Thus, the area attained from intersections between $x$-prongs of $Q$ and $P$ is in $[0,\eps^2]$, and in particular non-zero if and only if the $x$-coordinate of the translation is (approximately) $b_{\bindex}+d_k$ for some $j$ and $k$. The role of the $y$-prongs is analogous and guarantees that at the area of the intersection between the $y$-prongs of $Q$ and $P$ is in $[0,\eps^2]$ and only non-zero if the $y$-coordinate of the translation is (approximately) $c_{\cindex}+e_{\eindex}$. 

Similarly to the $x$-prongs and the $y$-prongs, the verifier of $Q$ can intersect $P$ only in its diagonal prongs. In particular, if the verifier intersects the $\aindex$\ts{th} diagonal prong, then the sum of the $x$- and $y$-coordinates of the translation moving the verifier into the prong is (approximately) equal to $a_{\aindex}$. Hence in total, we can (approximately) verify that there is a $(a_{\aindex},b_{\bindex},c_{\cindex},d_{\dindex},e_{\eindex})\in A\times B\times C\times D\times E$ with $a_{\aindex}=b_{\bindex}+c_{\cindex}+d_{\dindex}+e_{\eindex}$ by checking if the maximum overlap is at least (or rather exactly) $1+3\eps^2$ and strictly less than $1+3\eps^2$ otherwise.

\subsection{Correctness}

\begin{figure}[!t]
    \centering
    \includegraphics[width=0.8\textwidth]{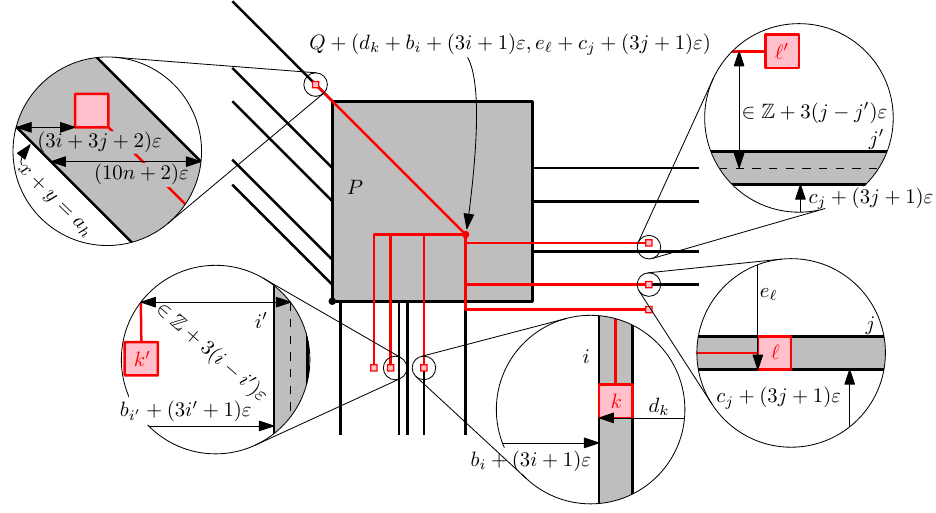}
    \caption{Illustration of the correctness of the $5$-SUM lower bound with the $\eps$-shifts for a given translation $(d_{\dindex}+b_{\bindex}+(3{\bindex}+1)\eps,e_{\eindex}+c_{\cindex}+(3{\cindex}+1)\eps)$. The highlighted areas starting from right to left in a clock-wise order correspond (i) for no ${\eindex}'\neq {\eindex}$ can the ${\eindex}'$\ts{th} $y$-prong of $Q$ align with a $y$-prong of $P$ if $\eps\ll 1/n$, (ii) the ${\eindex}$\ts{th} $y$-prong of $Q$ aligns with the ${\cindex}$\ts{th} $y$-prong of $P$,(iii) the ${\dindex}$\ts{th} $x$-prong of $Q$ aligns with the ${\bindex}$\ts{th} $x$-prong of $P$, (iv) for no ${\dindex}'\neq {\dindex}$ can the ${\dindex}'$\ts{th} $x$-prong of $Q$ align with a $y$-prong of $P$ if $\eps\ll 1/n$, and (v) if there is an $a_{\aindex}$ with $a_{\aindex}=b_{\bindex}+c_{\cindex}+d_{\dindex}+e_{\eindex}$, then the verifier of $Q$ lies in the $\aindex$\ts{th} diagonal prong of $P$. Otherwise, if $\eps\ll 1/n$, then the diagonal prong of $Q$ lies outside of $P$.
    }
    \label{fig:reductionCorrectness}
\end{figure}

We proceed as described above, first restricting the translations attaining an area of overlap of at least $1$ to translations in $[0,M]\times[0,M]$, and then further restricting the translations via the careful positioning of the prongs.

\begin{lemma}\label{lem:anchor}
    Let $P$ and $Q$ be polygons as constructed previously. Any translation $\tau$ such that $\area(P\cap(Q+\tau))>1$ must be in $[-1,M+1]\times[-1,M+1]$.    
\end{lemma}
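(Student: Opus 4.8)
The plan is to exploit the fact that $Q$ decomposes into the anchor gadget $Q_A$ of area exactly $1$ together with all the prongs and the verifier, whose total area is tiny compared to $1$. First I would record the areas: each $x$-prong, each $y$-prong, and the verifier of $Q$ is an $\eps\times\eps$ square of area $\eps^2$, and there are $m$ $x$-prongs, $m$ $y$-prongs, and one verifier, so the total area of $Q\setminus Q_A$ is $(2m+1)\eps^2$. Since $\eps=1/(100n^2)$ and $m\le n$, this total is far below $1$; concretely $(2m+1)\eps^2 \ll 1$. Hence
\[
\area\bigl(P\cap(Q+\tau)\bigr) \;\le\; \area\bigl(P\cap(Q_A+\tau)\bigr) + (2m+1)\eps^2.
\]
The strategy is therefore to argue that if the anchor square $Q_A+\tau$ is \emph{not} essentially inside the anchor square $P_A$ of $P$, then $\area(P\cap(Q_A+\tau))$ is also too small, forcing the total below $1$, a contradiction.

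Next I would analyze the contribution $\area(P\cap(Q_A+\tau))$. The anchor gadget $P_A$ sits in the band $[0,M]\times[100M,101M+1]$, far above all other nonzero-measure pieces of $P$, which live inside $[-2M,3M]\times[-2M,3M]$. So for $\tau=(\tau_x,\tau_y)$ the translated unit square $Q_A+\tau$ can only accrue area $>0$ from $P_A$ (intersections with the other gadgets contribute at most the remaining $(2m+1)\eps^2$, already accounted for, and in any case are negligible). Thus I would bound $\area(P\cap(Q_A+\tau))$ by the overlap of the unit square $Q_A+\tau$ with the rectangle $P_A$. Writing $Q_A=[0,1]\times[100M,100M+1]$ and $P_A=[0,M]\times[100M,101M+1]$, this overlap is the product of a horizontal and a vertical interval-intersection length, each at most $1$, and the overlap exceeds $1-(2m+1)\eps^2$ only when both interval intersections are nearly full length.

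The key computation is then that the horizontal overlap length $\lambda([0,M]\cap[\tau_x,\tau_x+1])$ and the vertical overlap length $\lambda([100M,101M+1]\cap[100M+\tau_y,100M+1+\tau_y])$ must each be at least $1-(2m+1)\eps^2$, since their product must exceed $1-(2m+1)\eps^2$. For the horizontal factor to be this close to $1$ we need $\tau_x\in[-(2m+1)\eps^2,\,M-1+(2m+1)\eps^2]$, and similarly for the vertical factor $\tau_y\in[-(2m+1)\eps^2,\,M+(2m+1)\eps^2]$. Since $(2m+1)\eps^2<1$, both intervals are contained in $[-1,M+1]$, which gives the claimed bound $\tau\in[-1,M+1]\times[-1,M+1]$.

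The main obstacle I anticipate is the bookkeeping of the additive slack: one must be careful that the $(2m+1)\eps^2$ term both bounds the non-anchor contribution and simultaneously controls how far the interval intersections may fall short of full length, so that the same small quantity appears consistently on both sides of the inequality. I would handle this by first fixing the crude bound $\area(P\cap(Q+\tau))\le \area(P_A\cap(Q_A+\tau)) + (2m+1)\eps^2$ and then using the hypothesis $\area(P\cap(Q+\tau))>1$ to force $\area(P_A\cap(Q_A+\tau))>1-(2m+1)\eps^2$; the loose, conservative choice of $\eps$ leaves ample room, so the interval bounds comfortably land inside $[-1,M+1]$ and no delicate estimation is needed.
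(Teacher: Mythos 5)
Your overall strategy (the non-anchor part of $Q$ has area $(2m+1)\eps^2\ll 1$, so the anchor square must carry essentially all of the overlap) is the same as the paper's, but there is a genuine gap at the step where you claim that ``the translated unit square $Q_A+\tau$ can only accrue area $>0$ from $P_A$.'' This is false: nothing restricts $\tau$ a priori, so for instance $\tau=(0,-100M)$ places $Q_A+\tau=[0,1]\times[0,1]$ entirely inside the origin rectangle $[0,M]\times[0,M]$ of $P$, giving $\area(P\cap(Q_A+\tau))=1$ even though $\area(P_A\cap(Q_A+\tau))=0$. Your parenthetical justification conflates two different quantities: the $(2m+1)\eps^2$ budget bounds the contribution of $Q\setminus Q_A$ to the overlap, not the contribution of $P\setminus P_A$ to the overlap with $Q_A$; the non-anchor part of $P$ has area $\Theta(M^2)$ and easily contains a unit square. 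Consequently the chain of inequalities in which you replace $\area(P\cap(Q_A+\tau))$ by $\area(P_A\cap(Q_A+\tau))$ breaks down precisely for the translations you most need to exclude.

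To close the gap you must separately handle the case where $Q_A+\tau$ meets $P\setminus P_A\subset[-2M,3M]\times[-2M,3M]$. This is exactly what the paper's proof does: such a $\tau$ forces $\tau_y\le -97M$, which pushes all of $(Q\setminus Q_A)+\tau$ below the line $y=-50M$, while $P$ lies entirely above $y=-2M$; hence the non-anchor part of $Q$ contributes nothing and $\area(P\cap(Q+\tau))\le\area(Q_A)=1$, contradicting the strict inequality in the hypothesis. Once that case is excluded, your interval computation for the overlap of $Q_A+\tau$ with $P_A$ is correct (and in fact yields a slightly tighter box than $[-1,M+1]^2$).
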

\begin{proof}
    By construction of $Q$, $\area(P\cap(Q+\tau))>1$ only if $\area(P\cap(Q_A+\tau))>0$ and $\area(P\cap((Q\setminus Q_A)+\tau))>0$. The first is only possible if $Q_A$ intersects either $P_A$ or $[-2M,3M]\times[-2M,3M]$, i.e., either $\tau\in ([-1,M+1]\times[-1,M+1])$ or $([-102M-1,-97M]\times[-2M-1,3M]))$. However, in the second case $(Q\setminus Q_A)+\tau$ lies below the line $y=-50M$. As $P$ lies completely above the line $y=-2M$, $\tau\in ([-1,M+1]\times[-1,M+1])$.
\end{proof}

We now concern ourselves with all other gadgets and their interactions. For the rest of this section, refer to \Cref{fig:reductionCorrectness}.

\begin{lemma}\label{lem:prongprong}
    Let $\tau\in [-1,M+1]\times[-1,M+1]$. Then
    \begin{enumerate}[(i)]
        \item $Q_v+\tau$ can only intersect diagonal prongs of $P$,
        \item Any $Q_x^{\dindex}+\tau$ can only intersect $x$-prongs of $P$, and
        \item Any $Q_y^{\eindex}+\tau$ can only intersect $y$-prongs of $P$.
    \end{enumerate}
\end{lemma}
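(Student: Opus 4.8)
The plan is to establish all three parts by the same elementary bounding-box argument. Each of the three gadgets $Q_v$, $Q_x^{\dindex}$, $Q_y^{\eindex}$ is an $\eps\times\eps$ square parked at an extreme corner of the construction, and I would first compute the axis-aligned box that contains its image under an \emph{arbitrary} translation $\tau=(\tau_x,\tau_y)\in[-1,M+1]\times[-1,M+1]$. The only quantitative facts I need are that $M$ dominates every element (all elements are positive integers with $M\ge 100\,s$ for each element $s$, so $0<s<M$) and that $\eps=1/(100n^2)<1$; together these give the single inequality $M>1+\eps$ that the whole proof rests on. Before the case analysis I would record where the four families of $P$-gadgets live: the origin rectangle in $[0,M]\times[0,M]$, every $x$-prong in $(0,M)\times[-2M,0]$, every $y$-prong in $[M,3M]\times(0,M)$, and every diagonal prong in $[-2M,M]\times[0,3M]$. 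The key structural observation is that these four families are pulled apart by the lines $x=0$, $x=M$ and $y=0$: among them, only the diagonal prongs reach $x<0$, only the $x$-prongs reach $y<0$, and only the $y$-prongs reach $x>M$.

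With this in hand each part becomes a one-line coordinate separation. For (i), $Q_v=[-2M,-2M+\eps]\times[2M,2M+\eps]$, so the image $Q_v+\tau$ has largest $x$-coordinate $-2M+\eps+\tau_x\le -M+1+\eps<0$; hence $Q_v+\tau$ lies entirely in the halfplane $x<0$, which meets none of the origin rectangle, the $x$-prongs, or the $y$-prongs (all contained in $x\ge 0$), leaving only the diagonal prongs. For (ii), $Q_x^{\dindex}+\tau$ has largest $y$-coordinate $-2M+\eps+\tau_y\le -M+1+\eps<0$, so it lies in $y<0$, which is disjoint from the origin rectangle, the $y$-prongs, and the diagonal prongs (all in $y\ge0$), leaving only the $x$-prongs. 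For (iii), $Q_y^{\eindex}+\tau$ has smallest $x$-coordinate $2M-\eps+\tau_x\ge 2M-\eps-1>M$, so it lies in $x>M$, which is disjoint from the origin rectangle, the $x$-prongs, and the diagonal prongs (all in $x\le M$), leaving only the $y$-prongs.

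I do not expect a genuine obstacle here; the proof is entirely a matter of careful geometric bookkeeping. The two points I would be most careful about are (a) verifying that each translated gadget is trapped in the claimed halfplane for the \emph{entire} range of $\tau$ (which is why $M$ must be chosen large, and is the sole reason the three constant inequalities reduce to $M>1+\eps$), and (b) noting that the measure-zero pieces used to connect the gadgets into single polygons do not affect the statement, since each such connector of a given $Q$-gadget lies in the same bounding box up to a set of measure zero, and the claim only concerns which $P$-gadgets the gadget can intersect. The mild risk is simply mislabeling which family of $P$-prongs occupies which halfplane, so I would double-check the four containments above against the gadget definitions before concluding.
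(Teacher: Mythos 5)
Your proof is correct and follows essentially the same route as the paper's: both arguments trap each translated $Q$-gadget in a halfplane ($x<0$ for $Q_v+\tau$, $y<0$ for $Q_x^{\dindex}+\tau$, $x>M$ for $Q_y^{\eindex}+\tau$) using the same bounding-box computations, and then note that only the corresponding family of $P$-prongs meets that halfplane. One trivial bookkeeping point: your list of where $P$'s pieces live omits the anchor gadget $P_A\subset[0,M]\times[100M,101M+1]$, but it is excluded by the very same three halfplane separations, so nothing breaks.
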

\begin{proof}
    For $(i)$ observe that diagonal prongs are the only pieces of $P$ contained in $(-\infty,0]\times\bR$. Now, for any $\tau\in [-1,M+1]\times[-1,M+1]$ it holds that $Q_v+\tau\in[-2M-1,-M+\eps+1]\times[2M-1,3M+\eps+1]$ and hence $Q_v+\tau\subset (-\infty,0]\times\bR$.

    For $(ii)$ similarly observe that the only pieces of $P$ contained in $\bR\times(-\infty,0]$ are the $x$-prongs of $P$. Further, all $Q_x^{\dindex}$ lie in $\bR\times[-2M,-2M+\eps]$. Hence $Q_x^{\dindex}+\tau$ lies in $\bR\times[-2M,-M+\eps+1]$ and thus can only intersect $x$-prongs of $P$.

    Lastly, for $(iii)$ observe that the only pieces of $P$ contained in $[M+1,\infty)\times\bR$ are the $y$-prongs of $P$. As before, $Q_x^{\eindex}$ lies in $[2M-\eps,2M]\times\bR$ and thus $Q_x^{\eindex}+\tau$ lies in $[2M-\eps-1,3M+1]\times\bR$ and so can only intersect the $y$-prongs of $P$, concluding the proof.
\end{proof}

\begin{lemma}\label{lem:equationInducers}
    Let $\tau=(\tau_x,\tau_y)\in [-1,M+1]\times[-1,M+1]$. Then
    \begin{enumerate}[(i)]
        \item if $\area(P\cap(Q_v+\tau))>0$, then $\exists a_{\aindex}\in A$ such that $\tau_x+\tau_y\in[a_{\aindex}-2\eps,a_{\aindex}+(10n+2)\eps]$,
        \item if $\area(P\cap(Q_x^{\dindex}+\tau))>0$, then $\exists b_{\bindex}\in B$ such that $\tau_x\in [d_{\dindex}+b_{\bindex}+3{\bindex}\eps,d_{\dindex}+b_{\bindex}+(3{\bindex}+2)\eps]$, and
        \item if $\area(P\cap(Q_y^{\eindex}+\tau))$, then $\exists c_{\cindex}\in C$ such that $\tau_y\in [e_{\eindex}+c_{\cindex}+3{\cindex}\eps,e_{\eindex}+c_{\cindex}+(3{\cindex}+2)\eps]$.
    \end{enumerate}
\end{lemma}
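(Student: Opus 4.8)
The plan is to prove each of the three statements by an explicit coordinate computation, exploiting the restriction established in \Cref{lem:prongprong} that each prong of $Q$ can only interact with the matching type of prong of $P$. Since the three cases are structurally identical (and the diagonal case differs only in that it constrains $\tau_x+\tau_y$ rather than a single coordinate), I would work out one case in full detail and then note that the others follow by the same reasoning.

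First I would handle case $(ii)$, the $x$-prongs. By \Cref{lem:prongprong}$(ii)$, if $\area(P\cap(Q_x^{\dindex}+\tau))>0$ then $Q_x^{\dindex}+\tau$ intersects some $x$-prong $P_x^{\bindex}$ of $P$. Recall that $Q_x^{\dindex}=[-d_{\dindex},-d_{\dindex}+\eps]\times[-2M,-2M+\eps]$ and $P_x^{\bindex}=[b_{\bindex}+(3{\bindex}+1)\eps,b_{\bindex}+(3{\bindex}+2)\eps]\times[-2M,0]$. For their $x$-extents to overlap after translating $Q_x^{\dindex}$ by $\tau_x$, the interval $[-d_{\dindex}+\tau_x,-d_{\dindex}+\eps+\tau_x]$ must meet $[b_{\bindex}+(3{\bindex}+1)\eps,b_{\bindex}+(3{\bindex}+2)\eps]$. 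I would write down the two inequalities expressing nonempty intersection of these two half-open intervals of length $\eps$ each, namely $-d_{\dindex}+\tau_x < b_{\bindex}+(3{\bindex}+2)\eps$ and $b_{\bindex}+(3{\bindex}+1)\eps < -d_{\dindex}+\eps+\tau_x$, and solve for $\tau_x$. This yields exactly $\tau_x\in[d_{\dindex}+b_{\bindex}+3{\bindex}\eps,\,d_{\dindex}+b_{\bindex}+(3{\bindex}+2)\eps]$ (the endpoint widths coming from the $\eps$-length of the $Q$-prong on one side and the $(3{\bindex}+1)\eps$ vs.\ $(3{\bindex}+2)\eps$ placement of the $P$-prong on the other). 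Case $(iii)$ is completely symmetric, exchanging the roles of $x$ and $y$ and reading off the $y$-extents of $Q_y^{\eindex}$ and $P_y^{\cindex}$; it gives the claimed bound on $\tau_y$.

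For case $(i)$, the diagonal/verifier gadget, the computation is slightly different because the diagonal prong $P_d^{\aindex}$ is the slab $\{(x,y): a_{\aindex}\le x+y\le a_{\aindex}+(10n+2)\eps\}$ intersected with a large region. By \Cref{lem:prongprong}$(i)$, a nonempty overlap forces $Q_v+\tau$ to meet some $P_d^{\aindex}$. Since $Q_v=[-2M,-2M+\eps]\times[2M,2M+\eps]$ is a small $\eps\times\eps$ square, every point $(x,y)$ of $Q_v+\tau$ satisfies $x+y\in[(\tau_x+\tau_y)-0,\,(\tau_x+\tau_y)+2\eps]$ up to the additive constant $-2M+2M=0$ coming from the square's base corner; more precisely $x+y$ ranges over an interval of length $2\eps$ centered appropriately. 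Requiring this short interval to intersect the slab condition $x+y\in[a_{\aindex},a_{\aindex}+(10n+2)\eps]$ translates, by the same nonempty-intersection inequalities applied to the sum coordinate $x+y$, into $\tau_x+\tau_y\in[a_{\aindex}-2\eps,\,a_{\aindex}+(10n+2)\eps]$, which is the asserted bound. The widened lower tolerance $-2\eps$ and the full slab width $(10n+2)\eps$ on the right account for the $\eps$-extent of the verifier square in both coordinates.

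The main obstacle I anticipate is purely bookkeeping rather than conceptual: keeping the half-open interval endpoints and the additive $\eps$-offsets consistent so that the resulting bounds match the stated intervals exactly, and being careful that the base $y$-coordinate $-2M$ of $Q_v$ and the base corner $2M$ cancel correctly in the sum $x+y$ for case $(i)$. There is no deep difficulty, since \Cref{lem:prongprong} has already eliminated all cross-type interactions; each case reduces to the elementary observation that two half-open intervals of prescribed lengths and positions overlap precisely within an explicit range of the shift parameter. I would therefore present the $x$-prong computation in full and dispatch the other two cases by symmetry and a one-line remark on the sum-coordinate substitution.
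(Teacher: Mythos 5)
Your proposal is correct and follows essentially the same route as the paper: invoke \Cref{lem:prongprong} to rule out cross-type interactions, then read off the necessary overlap condition on the relevant coordinate ($x$, $y$, or $x+y$) from the explicit positions of the prongs; your interval-intersection inequalities yield exactly the stated ranges (the open intervals you derive sit inside the closed ones claimed). The paper merely states "the claim follows" where you spell out the arithmetic, so there is no substantive difference.
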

\begin{proof}
    For $(i)$ observe that by \Cref{lem:prongprong} $Q_v+\tau$ may only intersect diagonal prongs of $P$. As the ${\aindex}$\ts{th} diagonal prong is defined by (the restriction of) the set $\{(x,y)\in\bR^2\mid a_{\aindex}\leq x+y\leq a_{\aindex}+(10n+2)\eps\}$, and $Q_v+\tau$ is a square with side length $\eps$ with its lower left corner at $(-2M+\tau_x,2M+\tau_y)$ the claim follows.

    Similarly, for $(ii)$, by \Cref{lem:prongprong} $Q_x^{\dindex}+\tau$ may only intersect the $x$-prongs of $P$. As the ${\bindex}$\ts{th} $x$-prong of $P$ is defined by the (restriction of) the set $\{(x,y)\in \bR^2\mid b_{\bindex}+(3{\bindex}+1)\eps\leq x\leq b_{\bindex}+(3{\bindex}+2)\eps\}$ and $Q_x^{\dindex}+\tau$ is a square with side length $\eps$ with its lower left corner at $(-d_{\dindex}+\tau_x,-2M+\tau_y)$, the claim follows.

    The proof for $(iii)$ is analogous to the proof of $(ii)$.
\end{proof}

\begin{lemma}\label{lem:upperboundPerGadget}
    Let $\tau=(\tau_x,\tau_y)\in [-1,M+1]\times[-1,M+1]$. Then
    \begin{enumerate}[(i)]
        \item $\area(P\cap(Q_v+\tau))\leq\eps^2$,
        \item $\area(P\cap(\bigcup_{\dindex} Q_x^{\dindex}+\tau))\leq\eps^2$, and
        \item $\area(P\cap(\bigcup_{\eindex}Q_y^{\eindex}+\tau))\leq\eps^2$.
    \end{enumerate}
\end{lemma}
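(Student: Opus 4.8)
The plan is to handle the three parts separately: part~(i) is immediate, while parts~(ii) and~(iii) both reduce to a single combinatorial observation exploiting the $\eps$-shifts applied to the prongs of $P$. For~(i), since $Q_v+\tau$ is one axis-aligned square of side length $\eps$, the inclusion $P\cap(Q_v+\tau)\subseteq Q_v+\tau$ gives $\area(P\cap(Q_v+\tau))\le\area(Q_v+\tau)=\eps^2$ at once.

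For~(ii), I would first apply \Cref{lem:prongprong}(ii), so that each $Q_x^{\dindex}+\tau$ meets only $x$-prongs of $P$, and by distributivity
\[\area\Big(P\cap\bigcup_{\dindex}(Q_x^{\dindex}+\tau)\Big)=\area\Big(\bigcup_{\bindex,\dindex}\big((Q_x^{\dindex}+\tau)\cap P_x^{\bindex}\big)\Big).\]
The crux is the claim that for the fixed $\tau=(\tau_x,\tau_y)$ at most one pair $(\bindex,\dindex)$ yields a nonempty $(Q_x^{\dindex}+\tau)\cap P_x^{\bindex}$. Granting this, the union on the right collapses to a single set contained in one square $Q_x^{\dindex}+\tau$, whence its area is at most $\eps^2$. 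To prove the claim I would use \Cref{lem:equationInducers}(ii): a nonempty intersection forces $\tau_x$ into the length-$2\eps$ interval $[d_{\dindex}+b_{\bindex}+3\bindex\eps,\,d_{\dindex}+b_{\bindex}+(3\bindex+2)\eps]$. If two pairs $(\bindex,\dindex)$ and $(\bindex',\dindex')$ both trap $\tau_x$, their left endpoints differ by less than $2\eps$, i.e.\ the integer $(b_{\bindex}+d_{\dindex})-(b_{\bindex'}+d_{\dindex'})$ plus $3(\bindex-\bindex')\eps$ has absolute value below $2\eps$. Since $\eps=1/(100n^2)$ and $\bindex,\bindex'\le n$, the $\eps$-terms are bounded by $3n\eps<1$, so the integer part must vanish: $b_{\bindex}+d_{\dindex}=b_{\bindex'}+d_{\dindex'}$. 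The residual inequality $3|\bindex-\bindex'|\eps<2\eps$ then forces $\bindex=\bindex'$, hence $d_{\dindex}=d_{\dindex'}$, and since $D$ is a set this gives $\dindex=\dindex'$.

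Part~(iii) is proved by the identical argument with the two coordinate roles exchanged, invoking \Cref{lem:prongprong}(iii) and \Cref{lem:equationInducers}(iii) and using $C,E$ in place of $B,D$. The main obstacle is the uniqueness claim in~(ii); it rests entirely on a separation of scales---distinct set elements are integers and hence at least $1$ apart, whereas the entire $\eps$-shift budget is of order $n\eps\ll1$---so the only step needing real care is verifying $3n\eps<1$ (which holds by the conservative choice $\eps=1/(100n^2)$) so that the integer part of the endpoint difference is genuinely forced to zero, together with the trivial remark that distinct indices index distinct set elements.
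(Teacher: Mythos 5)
Your proof is correct and follows essentially the same route as the paper's: both hinge on the scale separation between integer sums and the $\O(n\eps)\ll 1$ shift budget to force $\bindex=\bindex'$. The only (harmless) difference is the concluding step of (ii): you push the uniqueness all the way to a single pair $(\bindex,\dindex)$ via distinctness of the elements of $D$, whereas the paper stops at $\bindex=\bindex'$ and instead observes that the intersection is trapped in the $\eps\times\eps$ box formed by the width-$\eps$ strip of $P_x^{\bindex}$ and the height-$\eps$ strip containing all $Q_x^{\dindex}+\tau$.
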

\begin{proof}
    For $(i)$ observe that $\area(Q_v)\leq\eps^2$.
    
    For $(ii)$, suppose that $\area(P\cap(\bigcup_{\dindex} Q_x^{\dindex}+\tau))>\eps^2$ for the sake of contradiction. Hence, by \Cref{lem:equationInducers}, there are two indices $l\neq l'$ such that there is ${\bindex},{\bindex'}$ such that $\tau_x\in [d_{\dindex}+b_{\bindex}+3{\bindex}\eps,d_{\dindex}+b_{\bindex}+(3{\bindex}+2)\eps]$ and $\tau_x\in [d_{\dindex'}+b_{{\bindex'}}+3{{\bindex'}}\eps,d_{\dindex'}+b_{{\bindex'}}+(3{{\bindex'}}+2)\eps]$. 
    Then for both ${\bindex}$ and ${\bindex'}$ it holds that $3{\bindex}\eps$, $(3{\bindex}+2)\eps$, $3{{\bindex'}}\eps$, and $(3{{\bindex'}}+2)\eps$ are in $[0,5n\eps]=[0,1/(20n)]\subset[0,1)$. Further, both $d_{\dindex}+b_{\bindex}$ and $d_{\dindex'}+b_{{\bindex'}}$ are integers. Thus, $\tau_x\in \mathbb{N} + [3{\bindex}\eps,(3{\bindex}+2)\eps]$ and $\tau_x\in \mathbb{N} + [3{\bindex'}\eps,(3{\bindex'}+2)\eps]$. As both $[3{\bindex}\eps,(3{\bindex}+2)\eps]$ and $[3{\bindex'}\eps,(3{\bindex'}+2)\eps]$ are in $[0,1)$, this is only possible if $[3{\bindex}\eps,(3{\bindex}+2)\eps]$ and $[3{\bindex'}\eps,(3{\bindex'}+2)\eps]$ intersect. As ${\bindex}$ and ${\bindex'}$ are also integers, ${\bindex}={\bindex'}$. But as $\bigcup_{\dindex} Q_x^{\dindex}+\tau\subset \bR\times [-2M+\tau_y,2M+\eps+\tau_y]$ and $P_x^{\bindex}\subset [b_{\bindex}+(3{\bindex}+1)\eps,b_{\bindex}+(3{\bindex}+2)\eps]\times\bR$, in particular 
    \[\area(P\cap(\bigcup_{\dindex} Q_x^{\dindex}+\tau))\leq\area(P_x^{\bindex}\cap(\bigcup_{\dindex} Q_x^{\dindex}+\tau))\leq \eps^2,\]
    concluding the proof for $(ii)$. An analogous argument proves $(iii)$.
\end{proof}

\begin{lemma}\label{thm:reduction}
    Let $P$ and $Q$ be as constructed. The underlying $(3,2)$-SUM instance has a solution if and only if there is a translation $\tau$ such that $\area(P\cap(Q+\tau))\geq 1+3\eps^2$.
\end{lemma}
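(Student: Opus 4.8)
The plan is to prove both directions of the biconditional by carefully tracking the contributions of each gadget to the total overlap area, using the lemmas already established. The key structural fact, assembled from \Cref{lem:anchor}, is that any translation $\tau$ achieving overlap exceeding $1$ must lie in $[-1,M+1]\times[-1,M+1]$ and must place $Q_A$ inside $P_A$, contributing exactly $1$ to the overlap (since $Q_A$ has measure $1$ and sits fully inside the large square $P_A$ for such $\tau$). The remaining overlap then comes only from the interaction of the three prong gadgets of $Q$ (the $x$-prongs, $y$-prongs, and verifier) with $P$, and by \Cref{lem:upperboundPerGadget} each of these three families contributes at most $\eps^2$. Hence the total overlap is at most $1+3\eps^2$, and equality forces each of the three families to contribute exactly $\eps^2$, i.e., a full $\eps\times\eps$ square of overlap in each.

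\textbf{Forward direction.} First I would assume a $(3,2)$-SUM solution $(a_\aindex,b_\bindex,c_\cindex,d_\dindex,e_\eindex)$ with $a_\aindex=b_\bindex+c_\cindex+d_\dindex+e_\eindex$ exists, and exhibit an explicit translation. Guided by \Cref{fig:reductionCorrectness}, I would set $\tau_x = d_\dindex+b_\bindex+(3\bindex+1)\eps$ and $\tau_y = e_\eindex+c_\cindex+(3\cindex+1)\eps$. I would then verify three things: that the $\dindex$-th $x$-prong of $Q$ aligns fully with the $\bindex$-th $x$-prong of $P$ (yielding $\eps^2$), that the $\eindex$-th $y$-prong of $Q$ aligns fully with the $\cindex$-th $y$-prong of $P$ (another $\eps^2$), and that the verifier square $Q_v+\tau$ lies inside the $\aindex$-th diagonal prong (another $\eps^2$). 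For the verifier I would compute $\tau_x+\tau_y = a_\aindex + (3\bindex+3\cindex+2)\eps$ and check that this value, together with the $\eps$-width of $Q_v$, keeps $Q_v+\tau$ within the diagonal band $[a_\aindex, a_\aindex+(10n+2)\eps]$; this is exactly why the diagonal prong was given width $(10n+2)\eps$ rather than $\eps$. Together with the $1$ from the anchor, this gives overlap exactly $1+3\eps^2\geq 1+3\eps^2$.

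\textbf{Reverse direction.} Conversely, I would assume some $\tau$ achieves $\area(P\cap(Q+\tau))\geq 1+3\eps^2 > 1$, so by \Cref{lem:anchor} we have $\tau\in[-1,M+1]^2$ and the anchor contributes exactly $1$. The upper bounds of \Cref{lem:upperboundPerGadget} then force each of the three prong families to contribute \emph{exactly} $\eps^2$, hence strictly positive area. Positivity lets me invoke \Cref{lem:equationInducers} three times to obtain indices $\aindex,\bindex,\cindex,\dindex,\eindex$ and the containments $\tau_x+\tau_y\in[a_\aindex-2\eps,a_\aindex+(10n+2)\eps]$, $\tau_x\in[d_\dindex+b_\bindex+3\bindex\eps,\,d_\dindex+b_\bindex+(3\bindex+2)\eps]$, and $\tau_y\in[e_\eindex+c_\cindex+3\cindex\eps,\,e_\eindex+c_\cindex+(3\cindex+2)\eps]$. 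Adding the last two gives $\tau_x+\tau_y\in[b_\bindex+c_\cindex+d_\dindex+e_\eindex+(3\bindex+3\cindex)\eps,\,b_\bindex+c_\cindex+d_\dindex+e_\eindex+(3\bindex+3\cindex+4)\eps]$. Combined with the diagonal containment, the integer quantity $a_\aindex-(b_\bindex+c_\cindex+d_\dindex+e_\eindex)$ is pinned inside a window of width $\O(n\eps)\ll 1$, and since both endpoints are integers this forces $a_\aindex=b_\bindex+c_\cindex+d_\dindex+e_\eindex$, yielding the $(3,2)$-SUM solution.

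The main obstacle I anticipate is the reverse-direction interval arithmetic: I must confirm that the two windows for $\tau_x+\tau_y$ (one from the diagonal prong, one from summing the axis-prong constraints) overlap only when the underlying integers coincide. This requires checking that all the accumulated $\eps$-error terms, with coefficients as large as $10n$, stay within $[0,1)$ — which is precisely what the choice $\eps=1/(100n^2)$ guarantees, since the largest relevant multiple of $\eps$ is $\O(n\eps)=\O(1/n)\ll 1$. A secondary subtlety is justifying that the anchor contributes exactly $1$ and not more: I would note that $Q_A$ has measure $1$ and, for $\tau$ in the allowed range, $Q_A+\tau$ remains entirely within the interior of the large square $P_A$, so no prong of $P$ other than those already accounted for adds spurious area there. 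The rest reduces to the routine bookkeeping already packaged into the three preceding lemmas.
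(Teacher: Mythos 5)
Your proposal is correct and follows essentially the same route as the paper's proof: the same explicit witness translation $(d_{\dindex}+b_{\bindex}+(3\bindex+1)\eps,\, e_{\eindex}+c_{\cindex}+(3\cindex+1)\eps)$ in the forward direction, and the same chain of \Cref{lem:anchor}, \Cref{lem:upperboundPerGadget} (forcing each prong family to contribute exactly $\eps^2$), and \Cref{lem:equationInducers} followed by interval arithmetic pinning the integer $a_{\aindex}-(b_{\bindex}+c_{\cindex}+d_{\dindex}+e_{\eindex})$ to zero in the reverse direction. The only difference is presentational: you spell out the width check $(3\bindex+3\cindex+4)\eps\leq(10n+2)\eps$ for the diagonal band, which the paper leaves implicit.
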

\begin{proof}
    First assume there is $(a_{\aindex},b_{\bindex},c_{\cindex},d_{\dindex},e_{\eindex})\in A\times B\times C\times D\times E$ such that $a_{\aindex}=b_{\bindex}+c_{\cindex}+d_{\dindex}+e_{\eindex}$. Then let $\tau^*=(d_{\dindex}+b_{\bindex}+(3{\bindex}+1)\eps,e_{\eindex}+c_{\cindex}+(3{\cindex}+1)\eps)$. Clearly $\tau^*\in [0,M]$. Observe that $\area(P_x^{\bindex}\cap(Q_x^{\dindex}+\tau^*))=\eps^2$, $\area(P_y^{\cindex}\cap(Q_y^{\eindex}+\tau^*))=\eps^2$, $\area(P_d^{\aindex}\cap(Q_v+\tau^*))=\eps^2$, and $\area(P_A\cap(Q_A+\tau^*))=1$. Hence, by \Cref{lem:upperboundPerGadget}
    \[\area(P\cap(Q+\tau^*))=1+3\eps^2.\]
    Now let $\area(P\cap(Q+\tau))\geq 1+3\eps^2$ for some $\tau$. Then \Cref{lem:anchor} implies that $\tau\in[-1,M]\times[-1,M]$. But then \Cref{lem:upperboundPerGadget}, the pidgeon hole principle, and the fact that $\area(Q_A)=1$ imply that $\area(P\cap(Q_v+\tau))=\eps^2$, $\area(P\cap(\bigcup_{\dindex} Q_x^{\dindex}+\tau))=\eps^2$, and $\area(P\cap(\bigcup_{\eindex}Q_y^{\eindex}+\tau))=\eps^2$. But then \Cref{lem:equationInducers} implies that there is $(a_{\aindex},b_{\bindex},c_{\cindex},d_{\dindex},e_{\eindex})\in A\times B\times C\times D\times E$ with 
    \begin{enumerate}
        \item $\tau_x\in [d_{\dindex}+b_{\bindex}+3{\bindex}\eps,d_{\dindex}+b_{\bindex}+(3{\bindex}+2)\eps]$,
        \item $\tau_y\in [e_{\eindex}+c_{\cindex}+3{\cindex}\eps,e_{\eindex}+c_{\cindex}+(3{\cindex}+2)\eps]$, and 
        \item $\tau_x+\tau_y\in[a_{\aindex}-2\eps,a_{\aindex}+(10n+2)\eps]$.    
    \end{enumerate}
    This in turn implies that the interval
    \[[d_{\dindex}+b_{\bindex}+e_{\eindex}+c_{\cindex}+3(k+j)\eps,d_{\dindex}+b_{\bindex}+e_{\eindex}+c_{\cindex}+(3(k+j)+4)\eps]\]
    intersects $[a_{\aindex}-2\eps,a_{\aindex}+(10n+2)\eps]$. This implies in particular that
    \begin{align*}
        d_{\dindex}+b_{\bindex}+e_{\eindex}+c_{\cindex}&\in [a_{\aindex}-2\eps+3(k+j)\eps,a_{\aindex}+(10n+2)\eps+(3(k+j)+4)\eps]\\
        &\subset [a_{\aindex}-2\eps,a_{\aindex}+(25n)\eps]=[a_{\aindex}-2\eps,a_{\aindex}+1/(4n)].
    \end{align*}
    As both $2\eps$ and $1/(4n)$ is less than $1$, and $d_{\dindex}+b_{\bindex}+e_{\eindex}+c_{\cindex}$ and $a_{\aindex}$ are integers, in particular $a_{\aindex}=d_{\dindex}+b_{\bindex}+e_{\eindex}+c_{\cindex}$ concluding the proof.
\end{proof}

\thmFiveSum*
\begin{proof}
    This is an immediate consequence of \Cref{thm:reduction} and \Cref{lem:32sum} together with the fact that any coordinate in $P$ and $Q$ as constructed for \Cref{thm:reduction} is polynomial in the underlying $(3,2)$-SUM instance.
\end{proof}

\subsection{Extensions to related problems}

We remark that for $m=1$, a minor modification of our construction implies similar bounds for the polygon containment problem. The minor modifications consist of the following: We remove the anchor gadget $P_A$ for $P$ and place the anchor gadget $Q_A$ at $[0,1]\times[0,1]$. This similarly restricts translations to $[0,M]\times[0,M]$, as otherwise $Q_A$ is not contained in the origin polygon of $P$. We consider the $3$-SUM instance as a $(3,2)$-SUM instance with $D=E=\{0\}$. Note that this modified version of $P$ does not require any connecting pieces. For $Q$, the connecting pieces that connect the unique $x$- and $y$-prongs of $Q$ to the origin are straight lines. Hence, they are contained in any $P_x^{\bindex}$ and $P_y^{\cindex}$ if the corresponding prong of $Q$ is contained in $P_x^{\bindex}$ and $P_y^{\cindex}$ respectively. Lastly, the connecting piece for $Q_v$ is a diagonal line from $Q_v$ to $0$, such that it is contained in the diagonal prong of $P$ if it contains $Q_v$. 

Thus for the problem of \polyCont we obtain a similar lower bound to the presented lower bound for the maximum overlap problem.

\begin{problem}[\polyCont]
    Let $P$ and $Q$ be two polygons in the plane with complexities $n$ and $m$ respectively. Is there a translation of $Q$ such that it is contained in $P$?
\end{problem}


\polyContLowerBound*

We believe that our reduction extends to higher dimensions encoding imbalanced $(2d+1)$-SUM instances as \textsc{PolytopeMaxOverlap} instances in $d$ dimensions where instead of polygons we want to maximize the measure of the overlap of two $d$-dimensional polytopes. Similarly, it encodes $(d+1)$-SUM instances as \textsc{PolytopeContainment} instances in $d$ dimensions, where the goal is to translate a $d$-dimensional polytope such that it is contained in another $d$-dimensional polytope. Assuming the $k$-SUM Hypothesis, for \textsc{PolytopeContainment} in $d$ dimensions this reduction yields a lower bound of $\O(n^{\lceil(d+1)/2\rceil-\eps})$, while for \textsc{PolytopeMaxOverlap} this reduction yields a lower bound of $\O(\max(n^{\lceil(d+1)/2\rceil},nm^d)^{1-\eps})$.


\section{Open Problems}
We leave multiple intriguing open problems for future work:
\begin{itemize}
    \item It remains open what the optimal running time is for polygon containment and overlap on simple polygons when $m \in \omega(\mathrm{polylog}(n))$.
    \item Surprisingly, we still do not know whether polygon overlap is harder than polygon containment for any class of polygons or whether these problems have essentially the same algorithmic complexity.
    \item To tackle the above problem, it would be interesting to resolve whether our algorithm for orthogonal polygons can be improved or show that it is tight with a lower bound for orthogonal polygons.
\end{itemize}

\bibliographystyle{plainurl}
\bibliography{bibliography}

\end{document}